\documentclass[acmsmall]{acmart}
 \newtheorem{definition}{Definition}
 \newtheorem{problem}{Problem}
  \newtheorem{observation}{Observation}
 
  \newtheorem{theorem}{Theorem}
   \newtheorem{remark} {Remark}
    
\newtheorem{property}[]{Property}

 \usepackage{tcolorbox}
%

\newenvironment{RQObservation}[1]
    {\observation}
    {\endobservation}

    \usepackage{algpseudocode}
    \usepackage[inline]{enumitem}
\algnewcommand\algorithmicforeach{\textbf{for each}}
\algdef{S}[FOR]{ForEach}[1]{\algorithmicforeach\ #1\ \algorithmicdo}

\usepackage{amsmath,epsfig,epstopdf,amssymb}

\acmJournal{TOSEM}

\begin{CCSXML}
<ccs2012>
<concept>
<concept_id>10011007.10011074.10011092.10011782</concept_id>
<concept_desc>Software and its engineering~Automatic programming</concept_desc>
<concept_significance>500</concept_significance>
</concept>
<concept>
<concept_id>10002978.10003022.10003023</concept_id>
<concept_desc>Security and privacy~Software security engineering</concept_desc>
<concept_significance>300</concept_significance>
</concept>
</ccs2012>
\end{CCSXML}

\ccsdesc[500]{Software and its engineering~Automatic programming}
\ccsdesc[300]{Security and privacy~Software security engineering}

\usepackage{tabularx}
\usepackage{float}
\usepackage{glossaries}
\usepackage{listings}
\floatstyle{boxed}
\newfloat{algorithm}{htbp}{loa}
\floatname{algorithm}{Algorithm}
\newfloat{algorithm}{htbp}{loa}

\title{Smart Contract Repair}

\author{Xiao Liang Yu}
  \affiliation{\institution{National University of Singapore}\country{Singapore}}
  \email{xiaoly@comp.nus.edu.sg}
  
\author {Omar Al-Bataineh}
  \affiliation{\institution{National University of Singapore}\country{Singapore}}
  \email{omerdep@yahoo.com}

\author{David Lo}
  \affiliation{\institution{Singapore Management University}\country{Singapore}}
  \email{davidlo@smu.edu.sg}

\author{Abhik Roychoudhury}
\authornote{Corresponding Author}
  \affiliation{\institution{National University of Singapore}\country{Singapore}}
  \email{abhik@comp.nus.edu.sg}



\makeatletter

\makeatother

\newglossaryentry{sc}
{
        name={smart contract},
        description={E}
}
\newacronym{io}{IO}{integer overflow}
\newacronym{re}{RE}{reentrancy}
\newacronym{tod}{TOD}{transaction order dependence}
\newacronym{ed}{ED}{exception disorder}

\newglossaryentry{move}
{
        name={{\em move}},
        description={E}
}

\newglossaryentry{insert}
{
        name={{\em insert}},
        description={E}
}

\newglossaryentry{replace}
{
        name={{\em replace}},
        description={E}
}

\newcounter{RQ}

\definecolor{verylightgray}{rgb}{.97,.97,.97}

\lstdefinelanguage{Solidity}{
keywords=[1]{anonymous, assembly, assert, balance, break, call, callcode, case, catch, class, constant, continue, constructor, contract, debugger, default, delegatecall, delete, do, else, emit, event, experimental, export, external, false, finally, for, function, gas, if, implements, import, in, indexed, instanceof, interface, internal, is, length, library, log0, log1, log2, log3, log4, memory, modifier, new, payable, pragma, private, protected, public, pure, push, require, return, returns, revert, selfdestruct, send, solidity, storage, struct, suicide, super, switch, then, this, throw, transfer, true, try, typeof, using, value, view, while, with, addmod, ecrecover, keccak256, mulmod, ripemd160, sha256, sha3}, 
keywordstyle=[1]\color{blue}\bfseries,
keywords=[2]{address, bool, byte, bytes, bytes1, bytes2, bytes3, bytes4, bytes5, bytes6, bytes7, bytes8, bytes9, bytes10, bytes11, bytes12, bytes13, bytes14, bytes15, bytes16, bytes17, bytes18, bytes19, bytes20, bytes21, bytes22, bytes23, bytes24, bytes25, bytes26, bytes27, bytes28, bytes29, bytes30, bytes31, bytes32, enum, int, int8, int16, int24, int32, int40, int48, int56, int64, int72, int80, int88, int96, int104, int112, int120, int128, int136, int144, int152, int160, int168, int176, int184, int192, int200, int208, int216, int224, int232, int240, int248, int256, mapping, string, uint, uint8, uint16, uint24, uint32, uint40, uint48, uint56, uint64, uint72, uint80, uint88, uint96, uint104, uint112, uint120, uint128, uint136, uint144, uint152, uint160, uint168, uint176, uint184, uint192, uint200, uint208, uint216, uint224, uint232, uint240, uint248, uint256, var, void, ether, finney, szabo, wei, days, hours, minutes, seconds, weeks, years}, 
keywordstyle=[2]\color{teal}\bfseries,
keywords=[3]{block, blockhash, coinbase, difficulty, gaslimit, number, timestamp, msg, data, gas, sender, sig, value, now, tx, gasprice, origin}, 
keywordstyle=[3]\color{violet}\bfseries,
identifierstyle=\color{black},
sensitive=false,
comment=[l]{//},
morecomment=[s]{/*}{*/},
commentstyle=\color{gray}\ttfamily,
stringstyle=\color{red}\ttfamily,
morestring=[b]',
morestring=[b]"
}

\lstset{
language=Solidity,
backgroundcolor=\color{verylightgray},
extendedchars=true,
basicstyle=\footnotesize\ttfamily,
showstringspaces=false,
showspaces=false,
numbers=left,
numberstyle=\footnotesize,
numbersep=9pt,
tabsize=2,
breaklines=true,
showtabs=false,
captionpos=b
}

\begin{document}

\begin{abstract}

Smart contracts are automated or self-enforcing contracts
that can be used to exchange assets
without  having to place trust in third parties. Many commercial transactions use smart contracts due to their 
potential benefits in terms of secure peer-to-peer transactions independent of external parties.
Experience shows that many commonly used smart contracts are vulnerable 
to serious malicious attacks which may enable attackers to steal valuable assets of involving parties.
There is therefore a need to apply analysis and automated repair techniques
to detect and repair bugs in smart contracts before being deployed. 
In this work, we present the first general-purpose automated smart contract repair approach that is also gas-aware. Our repair method is search-based and searches among mutations of the buggy contract. Our method also considers the gas usage of the candidate patches by leveraging our novel notion of \emph{gas dominance relationship}. We have made our smart contract repair tool \textsc{SCRepair} available open-source, for investigation by the wider community.
\end{abstract}

\widowpenalty 10000
\clubpenalty 10000

\maketitle

\section{Introduction}

Smart contracts are automated or self-enforcing programs 
which currently underpin many online commercial transactions.
A smart contract is a series of instructions or operations written in
special programming languages which get executed when certain conditions are met.
Typically, smart contracts are running on the top of  blockchain systems, which 
are distributed systems whose storage
is represented as a sequence of blocks.
The key attractive property of smart contracts is mainly related to their ability to eliminate the need of trusted third parties in multiparty interactions, enabling parties to engage in secure peer-to-peer transactions without having to place trust in external parties (i.e., outside parties which help to fulfill the contractual obligations).  


While smart contracts are commonly used for commercial transactions, many malicious attacks in the past were made possible due to poorly written or vulnerable smart contracts. 
The code executed by smart contracts can be complex. There is therefore a need for testing (e.g. \cite{Jiang2018,Luu2016}), analysis (e.g. \cite{KalraGDS18}) and verification (e.g. \cite{Tsankov2018}) of smart contracts. In this paper, we take the technology for enhancing reliability of contracts one step further: once vulnerabilities in smart contracts are detected, we seek to automatically {\em repair} the vulnerabilities.

Automated program repair \cite{GOUES19} is an emerging technology for automatically fixing errors and vulnerabilities in programs via search, symbolic analysis, program synthesis and learning. The successful application of automated repair techniques to traditional programs  \cite{LeGoues2012,GouesNFW12,Martinez2015,Nguyen2013,Xuan2017,Mechtaev2015,Long2015,Mechtaev2016} raises the question of whether
 these techniques can be also applied to fix bugs in smart contracts. 
Several different approaches have been developed
to automatically repair bugs in traditional programs
which can be classified mainly into two categories:
heuristic repair approaches \cite{LeGoues2012,GouesNFW12,Martinez2015} and constraint-based repair approaches  \cite{Nguyen2013,Xuan2017,Mechtaev2015,Long2015,Mechtaev2016}.
The inputs to these approaches are a buggy program and a correctness criterion 
(often given as a test suite). 
The automated repair approaches  return a (often minimal) transformation of the buggy program, so that the transformed program passes all the tests in the given test-suite.

In practice, the implications of unfixed bugs in smart contracts can be more serious than the typical non-security sensitive programs for several reasons. 
First, smart contracts are open for inspection and running on a decentralized network, the whole program state of smart contracts is transparent to everyone.
Second, the generated patch for a vulnerable smart contract should not only
fix the detected vulnerabilities but also needs to be 
mindful of the gas consumption of the resultant patched program.
The blockchain system on which the contract will be running typically has a gas usage limit.
Third, the quality of the generated patch for a vulnerable smart contract is a major design issue to be considered 
as smart contracts are typically used for commercial transactions. In fact, malicious agents may take advantage
of unfixed bugs in smart contracts to steal some
valuable assets of the  parties involved.

In this work, we develop an automated smart contract repair algorithm  using genetic programming search. Given a vulnerable smart contract and test suite, we conduct a parallel, biased random search for a set of edits to the contract that fixes 
a given vulnerability without breaking any test that previously passed. The bias in the search comes from the objective function driving the search. The parallelization strategy consists of splitting the search space into mutually-exclusive (disjoint) sub-spaces, where patches in each sub-space are concurrently and independently generated and validated.
We introduce also the notion of gas dominance level for smart contracts which enables us to compare the quality of patches based  on their runtime gas.
The gas dominance level 
can be used to compare the quality of generated patches. This also emphasizes our position is that automated repair of smart contracts needs to be gas-aware.
 
 To evaluate the effectiveness of our genetic repair algorithm, we constructed a dataset of vulnerable smart contracts taken from the Ethereum mainnet network, which is the main network wherein actual transactions of smart contracts take place on a distributed ledger. Hence, our constructed dataset consists
of real-world smart contracts.
During our evaluation, we considered 20 vulnerable contracts
 which have been selected randomly from the constructed
  dataset while taking into
 consideration the class of detected vulnerabilities
 and the complexity of the vulnerable contracts.
 The vulnerable contracts have been selected in a way such that  most of the common classes of vulnerabilities that are typically made by smart contract developers are covered when evaluating the genetic algorithm. 
However, to understand
 and draw some valid conclusions
 about the factors affecting 
 the correctness and quality of 
 patches generated by
 the algorithm,
 we have evaluated the algorithm under many different  settings and configurations.
 Examples of such settings include:
 (i) enabling/disabling the gas calculation
 of generated patches, (ii) varying
 the size of time budget allocated to the algorithm
.
Our genetic  algorithm
 was able to fully repair 10 vulnerable smart contracts from the selected set of 20 vulnerable contracts,  achieving a $50\%$ success rate. It is interesting to mention that most of the selected vulnerable contracts have multiple bugs and we therefore assert a vulnerable contract as repaired
 if all detected bugs are repaired.

 \paragraph*{Contributions}
 We summarize our main contributions as follows.

\begin{itemize}

\item   We present the first automated smart contract repair approach
that is gas-optimized and vulnerability-agnostic. The approach is inspired by genetic programming and can be used to generate a patch for a given vulnerable smart contract.

\item We describe a parallel genetic repair algorithm
 that can be used to split the large search space of candidate patches into smaller mutually-exclusive search spaces which can be processed independently. The presented parallel algorithm helps to process large number of candidate patches in a short computational time and therefore, in contrast to previous repair approaches, repairs can be generated  faster.
It also improves the scalability
of genetic repair algorithms so that large real-world contracts
can be repaired.

 \item We show how to integrate gas-awareness into the repair of smart contracts. This is crucial for smart contracts
 as excessive unnecessary gas consumption of contracts can lead to financial loss or
 out-of-gas exceptions when running the contract on a public
 blockchain network. It is therefore necessary to reduce the cost of running the contract and also
 the possibility of introducing new out-of-gas exceptions
 when repairing a vulnerable smart contract.
 We introduce a simple yet effective gas ranking approach with the novel notion of \emph{Gas Dominance Level}
 that can be used to rank generated patches of a given
 vulnerable smart contract during the patch generation.
 In general, the gas consumption of
 a given smart contract can be a non-constant bound
 which can be described as a parametric gas formula
 that takes into consideration both static and dynamic
 parameters that affect the cost of the contract
 including the instruction gas, memory gas, stack gas, and storage gas. We provide an acceleration technique to quickly compare candidate patches in terms of gas consumption, by introducing the concept of {\em gas dominance levels}.
 

\item Based on the above described techniques, 
we develop a fully automated repairing tool 
for smart contracts (which we call \textsc{ SCRepair}) which is integrated with a gas ranking approach
to generate an gas-optimized secure contract. 
Our tool can both detect and repair  security vulnerabilities in smart contracts. It is does so by integrating the tool \textsc{SCRepair} with the powerful 
smart contract security analyzer Oyente \cite{Luu2016} and Slither \cite{slither}.
We demonstrate that our approach is effective in fixing bugs 
 for real-world smart contracts.
  Our approach can deal with bugs whose fixes involve multi-line changes. Our smart contract repair tool and dataset is publicly available in GitHub from
  {\bf \url{https://SCRepair-APR.github.io}}    
\end{itemize}

\section{BlockChain and Smart Contracts}
The blockchain  technology is a distributed database 
that maintains records and transactions in a decentralized fashion. It has been
adopted in many applications to increase security and
reliability and to avoid the need for a trusted third party.
The transactions on blockchain are available to all the parties in the network in real-time and all the parties are allowed to interact with each other in a distributed manner. It uses state-of-the-art cryptography,  and hence it enables parties to engage in secure peer-to-peer transactions. The decentralized nature of blockchain makes it  suitable for many applications including decentralized cloud storage with provenance\cite{ethdrive},  general health care management, IoT data sharing with assured integrity\cite{liu2017blockchain}, and general commercial transactions.

Smart contracts are one of the most successful applications of the blockchain technology. They currently underpin many online commercial transactions which are typically running on the top of blockchain systems.
A smart contract is a special computer program whose executions are done in a decentralized and tamper-proof manner.
The key attractive property of smart contracts is mainly related to their ability to eliminate the need of trusted third parties in multiparty interactions.
Smart contracts allow for decentralized automation by facilitating, verifying, and enforcing the conditions of an underlying agreement.

Ethereum is the most popular blockchain platform supporting smart contracts. It supports a feature called Turing-completeness that allows the creation of practically useful smart contracts.  Smart contracts are typically written using the programming language ``Solidity''. 
Note that everything executed on Ethereum 
costs some gas for giving the miners incentive to perform the computations \cite{wood2014ethereum}. For example, executing an ADD instruction costs
3 units of gas. Storing a byte costs 4 units or 68 units of gas, depending on the
value of the byte (zero or non-zero). 
Hence, any slight mutation 
to the source code of a smart contract can change the gas usage of the contract tremendously (and hence
the amount of money that the parties of a transaction need to pay when running the smart contract on a real blockchain network).

To develop a better understanding of blockchain and smart contracts, 
let us consider  an example. Suppose that Bob would like to sell
a property (house) to Alice and Alice is willing to pay 100 Ether (a cryptocurrency) as a price for that
property and that Bob is happy with Alice's offer. 
After some discussion, they agreed to proceed with their business transaction
 and wish to perform it in an automated way by taking advantage of blockchain and smart contracts.
From the given description of the problem, one 
can see that there are three main conditions that any possible solution to the problem needs to satisfy:
(1) Bob has legal ownership of the property that he is selling (2) Alice can get the ownership of Bob's property 
only if she transferred 100 Ether to Bob, 
and (3) Bob can get 100 Ether from Alice only if he transferred the ownership of his property to Alice.
The transaction can be said to be successful if upon completion, the ownership of Bob's property is transferred to Alice while Bob receives 100 Ether.

\begin{figure}

\lstinputlisting[language=Solidity]{code/CommercialTransaction.sol}
\caption{A smart contract written in Solidity language that allows two parties to be involved in a commercial transaction to sell some property. \lstinline[language=Solidity]{msg.sender} represents the party calling the function. Certain definitions are omitted for brevity.}
\label{fig:AliceBobEx}

\end{figure}


Suppose that Alice and Bob perform their  transaction 
using the smart contract given in Fig. \ref{fig:AliceBobEx} written in the most popular smart contract programming language Solidity.
The code consists of a number of functions needed in order to perform the 
commercial transaction in an atomic way. 
The function \verb+transferA+ is used by Alice to 
transfer 100 Ether to the smart contract 
and hence when this function is executed, the money comes under the control of the smart contract.
The function \verb+transferB+ is used by Bob to inform the smart contract that the ownership of his property has been transferred to the smart contract.
So that after executing the functions  \verb+transferA+ and  \verb+transferB+, 
the smart contract is supposed to hold both the money of Alice and the property of Bob. The function \verb+finalize+ is used to
finalize the transaction by transferring the money
from Alice to Bob and the property from Bob to Alice.
The smart contract provides also two more functions,
namely \verb+abortA+ and \verb+abortB+
which are available to both Alice and Bob respectively.
The goal of these functions is to protect the parties from the situation where the purchase is canceled halfway while the smart contract has already held the assets from the parties, so that the parties can get back their assets from this contract.

Recently, there has been a growing interest in  verification and validation of smart contracts \cite{Grossman2017,KalraGDS18,Jiang2018,Amani2018,Meyden2019},
as vulnerabilities in smart contracts can have serious adverse consequences.
Therefore, a number of vulnerability detection tools have been developed for smart contracts including Oyente \cite{Luu2016}, Slither\cite{slither}, and ContractFuzzer \cite{Jiang2018}. 
In general, smart contract vulnerabilities can be categorized into
three categories \cite{Atzei2017}: (i) vulnerabilities at the blockchain level, (ii) vulnerabilities at the Ethereum virtual machine level, and (iii) vulnerabilities at the source code level.
In this work, we are interested on the vulnerabilities that can be repaired at the level of source code.

\begin{table}[h]
\caption{Selected smart contract vulnerabilities that can be fixed by modifying Solidity source code}
\begin{center}
\begin{tabular}{ |l|l|}
\hline
\textbf{Class of vulnerability} & \textbf{ References} \\ \hline
Exception disorders / Mishandled exceptions / Gasless send    &  \cite{Atzei2017,Dika2019,Luu2016,Bhargavan2016, Tikhomirov2018,Jiang2018,Tsankov2018}     \\ \hline
Reentrancy  &  \cite{Atzei2017,Dika2019,Luu2016, Bhargavan2016,Tikhomirov2018,Jiang2018,Tsankov2018}           \\ \hline
Integer overflow / Integer underflow / Unchecked math &    \cite{Dika2019,Tsankov2018,Tikhomirov2018,Luu2016}   \\ \hline
Transaction order dependence / Unpredictable state   &   \cite{Atzei2017,Dika2019,Luu2016,Tsankov2018}   \\ \hline
\end{tabular}
\end{center}
\label{table:vulnerability}

\end{table}
Based on our conducted literature review on recent research work on smart contracts \cite{Atzei2017, Jiang2018, Luu2016, Tsankov2018, Delmolino2016, Bhargavan2016, Dika2019, Tikhomirov2018, GrishchenkoMS18}, we summarize in \autoref{table:vulnerability}  some selected  popular vulnerabilities that can be detected using the tools Oyente \cite{Luu2016} and Slither \cite{slither}.
Table \ref{table:vulnerability} shows  
a summary of these widely studied vulnerabilities.
We give a detailed description of these classes
of vulnerabilities in \autoref{sec: implementation}.

\section{The Smart Contract Repair Problem}

Recent advances in program repair techniques \cite{GOUES19} have raised  the possibility of developing program repair technology for smart contracts.
In this section, we discuss the  automated smart contract repair problem together with the set of challenges that might be encountered in repairing smart contracts.
We also discuss the key differences between the smart contract repair problem
and the traditional program repair problem.
 
\begin{problem} (\textbf{Automated smart contract repair problem}). \label{problem:ascr}
Consider a vulnerable smart contract $C$ with a set of detected vulnerabilities $U$, a test suite $T$ and a maximum gas usage bound $L$, the automated smart contract repair problem
 is the problem of developing an algorithm that takes as input 
$(C, U, T, L)$ and produces as an output a new contract $C^{'}$
that is similar to $C$ but has all vulnerabilities in $U$ fixed, passing all tests in $T$, and the maximum gas usage of feasible execution paths should be less than or equal to $L$.
\end{problem}

The smart contract repair problem
is similar to the traditional program repair problem.
However, the smart contract repair problem introduces some extra computational complexity 
as the patch generation needs to be gas-aware.
It is also highly desirable for the patches to signify readable and small changes, so that the patched contract is easily comprehensible. Overall, we would want the  
the (syntactic) structure of the vulnerable contract to be maximally preserved.


Since detailed formal specifications of intended program behavior are typically unavailable, program repair  uses weak correctness criteria, 
such as an assertion of existence of vulnerabilities by vulnerability detector and a test suite. 
Therefore, the validity of patches is relative  to the chosen vulnerability detector and the available test cases.

As mentioned earlier, the generated patches for smart contracts need to meet more criteria than those generated  for traditional programs. This is  mainly due to the fact that smart contracts are typically running on the top of the blockchain systems, which impose certain constraints on the total computational resources used by the contract. The execution of the smart contract
needs to comply with the gas usage constraints imposed by the blockchain system.
Note that if the running smart contract exceeds
the allowed upper bound limit of the gas usage, the execution of the contract will be interrupted and a ``out-of-gas'' exception will be thrown.

\begin{definition} \label{ValidityDef} 
(\textbf{Validity criteria of generated patches}).
Given a vulnerable smart contract $C$ with a set of detected vulnerabilities $U$ and a test suite $T$ that consists of  two sets:
the failing tests $T_{F}$  
and the passing tests $T_{P}$. 
Suppose that the contract $C$ is running on the top of 
a blockchain system $B$ and that the maximum allowed
gas usage available to the contract is bounded by $L$. 
We say that the new patched smart contract $C^{'}$
is a valid plausibly fixed contract if it satisfies the following requirements.

\begin{enumerate}

\item The  contract $C^{'}$ is not vulnerable to the vulnerabilities in $U$.
\item  The  contract $C^{'}$ passes all tests in $T_{F}$.

\item The  contract $C^{'}$ does not break any test in $T_{P}$.

\item There is no feasible execution path in $C^{'}$ 
whose total gas consumption exceeds the bound $L$.

\end{enumerate}

\end{definition}

Typically, the bound $L$ imposed
on the gas usage of the contract is determined
by the involving  parties of the transaction,
the structure and semantics of the smart contract,
and the block gas limit of the blockchain.
Such bound (if known) can be incorporated
in the patch generation process for vulnerable contracts
in order to avoid introducing new out-of-gas exceptions. 
Note that requirement 4 of Definition \ref{ValidityDef} can be checked by enumerating all feasible paths
in the patched contract $C^{'}$ and then verifying that there is no feasible path that exceeds the bound $L$.

In addition to the above correctness requirements, we are also interested in certain desirable properties indicating 
patch quality, as described in the following.


\begin{enumerate}

\item \textbf{The simplicity of the patch}. The simplicity of the edited
contract can be measured in terms
of the number of edits that have been made to the original contract.

\item \textbf{The cost of the patch}. The cost of the contract can be measured in  different ways. We choose here the average
gas usage as a metric to measure the cost of the contract.

\end{enumerate}

To evaluate the quality requirements of a
generated patch we introduce two functions, namely $\mathit{diff} (C, C^{'})$ and $cost (C^{'})$.
The function $\mathit{diff} (C, C^{'})$ returns a numerical value
that specifies how much the edited contract $C^{'}$
differs from the original vulnerable contract $C$.
Replacing expressions, inserting of new statements,
and moving/deleting of statements will be counted
when computing $\mathit{diff} (C, C^{'})$. Overall, $\mathit{diff}(C,C')$ captures the edit distance between two smart contracts $C$ and $C'$.

The function $cost(C)$ computes the average cost of
gas usage of a given smart contract. 
Recall that every single operation that takes part in the blockchain network consumes some amount of gas. Gas is what is used to calculate the fee that need to be paid to the miner in order to execute operations.
Of course, the cost of transactions can vary from one to the other depending on the details of the transaction and the structure and complexity of the smart contract. 
However, for a given smart contract $C$ and a specific transaction $t$,
one can perform certain calculations to compute the average cost or the maximum expected cost of the transaction in gas units, provided that the cost of each operation
of the contract on the running blockchain system is known in advance.
We defer the discussion of the computational details of gas usage
of a given smart contract to Section \ref{GasAnalysis}.

\paragraph*{On Plausible and Correct Patches}
In this paper, we use the terminology of {\em plausible patch} and {\em correct patch}. Here we rely on the terminology in program repair literature (e.g. see \cite{GOUES19}), where a correct patch is deemed to be correct via manual analysis, but a plausible patch is one produced by a repair technique since it passes all given tests. Since a formal complete specification of the intended program behavior is not available, the description of intended behavior given to a program repair technique is incomplete: it is given in the form of tests, assertions or vulnerabilities found. A plausible patch generated by the repair algorithm thus meets the intended behavior as per this incomplete description that was provided to the repair method. Thus, if the repair method was given a test-suite T, a plausible patch can still potentially fail a test t outside T. For this reason, a plausible patch cannot be guaranteed to be correct, and we need a manual validation step to ascertain how many of the plausible patches generated are correct. We have conducted such an evaluation in our work, in Section \ref{sec:RQ}.

\section{The Smart Contract Repair Framework}

In this section, we present a multi-objective genetic repair algorithm
with mainly four objectives: two objectives related to
the correctness of the smart contract and two related
to the quality of the generated patch.
We develop an efficient genetic search approach to generate a patch 
for a vulnerable smart contract. Our proposed genetic search technique employs
mutation operators
to generate fix candidates for the vulnerable contract
and then uses fitness functions 
to evaluate the suitability of the candidate patch.
The overall goal of our approach is to generate correct, 
high-quality, and  gas-optimized fixes for the vulnerable smart contract. 

\paragraph*{Advantages of our search-based approach}
The main motivation behind developing a genetic repair
approach relies on the hypothesis that most software bugs
  introduced by  programmers are due to small syntactic errors. Furthermore, the genetic search technique also has the following advantages with respect other common repair techniques. 
  \begin{itemize}
      \item Semantic repair techniques employ symbolic execution and program synthesis for repairing programs. Employing such techniques for smart contract repair will deprive our approach of the natural ability to insert/delete statements which seems to be important for repairing common smart contract vulnerabilities like the reentrancy vulnerability.
      \item Template based repair techniques can be used as a purely static approach to smart contract repair. In this approach for every detected vulnerability type, a specific program transformation template can be employed for repair. Such an approach deprives us the possibility of exploring a variety of patch candidates and enforce patch quality indicators in terms of gas consumption and patch simplicity.
  \end{itemize}

 \subsection{Mutation Analysis of Smart Contracts}
 
 The mutation analysis of a vulnerable smart contract
 is the process through which a set of contract variants, called mutants, are generated by seeding a large number of small syntactic changes into the vulnerable contract using some mutation operators. The mutants are considered as patch candidates and we use the various correctness criteria and patch quality criteria, to choose and prioritize from the pool of patch candidates.

 \subsection{Mutation Operators and Patch representation}

We employ three mutation operators. The \gls{move} operator moves a given statement in the analyzed smart contract to some other location in the contract. 
The \gls{insert} operator inserts a randomly synthesized statement before or after a given buggy statement. The \gls{replace} operator replaces a potentially-buggy expression with another randomly synthesized expression. Our set of mutation operators contains both statement-level and expression-level operators to allow efficient mutation conducted at different granularity.

\paragraph*{Patch Representation}
A patch candidate is represented in terms of the mutation operations that need to be performed on the abstract syntax tree of the original vulnerable contract $C$ being repaired.

 \subsection{Generating Mutated Smart Contracts} \label{sec:Algorithm}
 
A large number of mutants
may be introduced when repairing a vulnerable smart contract
 depending on the size of the contract,
leading to searching among an extremely large set of mutants. Note that the validation process  of the generated mutants can be extremely costly and time-consuming as also shown by other works on automated program repair \cite{GouesNFW12}. Each mutant may need 
 to be detected using the vulnerability detectors and tested against the original test suite.
 It is therefore necessary to apply a parallelization methodology in order to speed up the validation process
  of candidate mutants for a given vulnerable contract.
 
 All mutation operators used in our repair framework can affect the cost of the vulnerable smart contract $C$ which is also confirmed by our experiments.
Their effect on the cost of the contract
 can be considerable, especially
 when the vulnerable contract contains loops that can be repeated a large number of times. 
 If a plausible patch of $C$ is obtained
 by replacing or inserting a statement within the body of the loop then the cost of the contract may change dramatically. It is crucial to search for a gas-optimized patch when repairing smart contracts in order to minimize the possibility of introducing new out-of-gas exceptions to the smart contract being repaired.

 In general, generating a gas-optimized repair
 for a given vulnerable smart contract can be a computationally complex task.
 Note that the repair should not only fix the vulnerability in the contract
 but also needs to not increase the gas usage significantly.
 To achieve such a goal, 
 one might choose to mutate the vulnerable smart contract
 $C$ by favoring the mutation operators \gls{move} and \gls{replace} over the mutation operator \gls{insert}
  when searching for low-cost patches. 
  Indeed, intuitively, when we add new instructions onto the program would likely to increase the computational demand.
 Unfortunately, such a simple prioritization strategy 
 does not necessarily lead
 to the least costly plausible patch for the vulnerable contract. Subtle interactions between the operators can turn a low-cost
  contract into a high-cost contract and vice versa.
  For example, the \gls{insert} mutation operator which supposes
  to increase the cost of the contract by adding a new
  statement, may sometimes lead to a mutant with lower
  gas usage than the original vulnerable contract. 
  Similarly, the \gls{move} mutation operator which supposes
  not to increase the cost of the contract
   can also lead to a mutant whose gas usage is higher than
  that of the original contact.
  The cost of the generated mutant does not depend only on 
  the cost of the applied mutation operations
  but also on the way the operators change 
  the behavior of the contract.
   We therefore cannot favor one operator over another when searching 
    for low cost repairs without performing some analysis 
    on the overall structure of the vulnerable contract.

  Let us consider some trivial examples to demonstrate
  how the \gls{insert} operator can turn a high-cost contract into a low-cost contract
  while the \gls{move} operator may turn a low-cost contract into a high-cost contract.
  The program in Fig. \ref{fig:Ex1} 
  represents a buggy  program. 
  Suppose that we generate a mutant for this program
  by inserting a new statement after the initialization statement
  (line 1) of the form: \lstinline[language=Solidity]{a = false;} .
  In this case, the loop in the generated mutant will be skipped
  and  the average gas usage of the new mutated version
  will be much smaller than that of the original version.
  The program in Fig. \ref{fig:Ex2} represents another potentially buggy program.
  Let us generate a random mutant 
  of the program by applying the \gls{move} operator 
  so that the statement at line 4 (the loop counter update statement)
  is moved outside the loop. Obviously, this will turn
  the loop into an infinite loop and hence the contract will
  run out of gas after certain number of iterations.
  Note that since mutation makes random changes to
   the buggy smart contract, it
   may impact the performance and cost of the contract 
   in many different arbitrary ways. This is critical
   especially when the buggy smart contract contains loops.

\begin{figure}
\centering
\begin{minipage}{\dimexpr 0.55\textwidth -0.2\columnsep}
  \centering
    \begin{lstlisting}[language=Solidity]
bool a = true;
while (a) {
    // Some computation
}
    \end{lstlisting}
   \caption{Inserting \lstinline[language=Solidity]{a = false} after line 1 reduces gas}
  \label{fig:Ex1}
\end{minipage}%
\begin{minipage}{\dimexpr 0.45\textwidth -0.2\columnsep}
  \centering
    \begin{lstlisting}[language=Solidity]
int x = 0;
while (x <= 100) {
    x = x + 2;
    // Some computation
}
    \end{lstlisting}
   \caption{Moving line 3 outside loop increases gas}
  \label{fig:Ex2}
\end{minipage}
\end{figure}

\begin{observation}
There is insufficient information to predict the gas of a mutated contract by inspecting the mutation operations applied. For example, the successive applications of the mutation operators not introducing new statements
\emph{(\gls{move}, \gls{replace})} does not necessarily lead
to a low-cost mutant w.r.t. the original smart contract. Similarly, 
the successive applications of the mutation operator inserting new statements
\gls{insert} does not necessarily lead to a high-cost mutant
w.r.t. the original smart contract. 
The cost of the generated mutants depends mainly
on how the applied mutation operators 
change the behavior of the smart contract.

\end{observation}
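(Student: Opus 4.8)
The plan is to prove the observation by explicit witnesses, which is exactly what a negative statement (``does not necessarily'') calls for: it suffices to exhibit one contract on which \gls{insert} lowers the gas and one on which \gls{move} (or \gls{replace}) raises it. First I would record the structural reason such witnesses must exist. The quantity $cost(\cdot)$ is obtained by summing, over the instructions actually executed along each feasible path, their individual EVM gas costs together with the relevant loop trip counts and the memory/stack/storage contributions; it is therefore a whole-program, path-sensitive functional of the contract. A mutation operator, by contrast, is a purely local, syntactic rewrite of the AST. Consequently the composite map from ``the multiset of operators applied'' to ``the change in $cost$'' is not determined by the operators alone: it factors through the question of which paths the rewrite makes feasible and how many times the loops along them iterate. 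This reduces the claim to constructing the two programs.

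For the half about \gls{insert}, I would use the fragment of Fig.~\ref{fig:Ex1}: with \lstinline[language=Solidity]{bool a = true;} the loop \lstinline[language=Solidity]{while (a)} never terminates, since the (elided) body does not touch \lstinline[language=Solidity]{a}, so its execution is halted only by the out-of-gas exception and its gas usage equals, up to a constant, the bound $L$. Applying \gls{insert} once to place \lstinline[language=Solidity]{a = false;} immediately after line~1 falsifies the guard before the first iteration, so the mutant executes only the declaration, one assignment, and one (failed) guard test, for $O(1)$ gas; following this by further benign insertions (which neither resurrect the dead loop nor create a new one) keeps the cost $O(1)$, which handles ``successive applications''. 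Hence a statement-introducing sequence can strictly \emph{decrease} cost. For the half about \gls{move}/\gls{replace}, I would use Fig.~\ref{fig:Ex2}: the original, with \lstinline[language=Solidity]{x = x + 2;} inside the loop, terminates after $51$ iterations at some finite gas $g$; the \gls{move} that hoists \lstinline[language=Solidity]{x = x + 2;} out of the body (or the \gls{replace} of that line by \lstinline[language=Solidity]{x = x;}) freezes \lstinline[language=Solidity]{x} at $0$, the guard \lstinline[language=Solidity]{x <= 100} never fails, and the mutant again runs to the out-of-gas exception with cost $\Theta(L) > g$. A single step of a non-introducing operator thus \emph{increases} cost, and following it with further moves/replaces of unrelated statements leaves it high. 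The two witnesses together establish both halves of the observation.

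The only remaining work is to pin the informal estimates to the concrete gas model fixed in Section~\ref{GasAnalysis}: one must verify that the constant hidden in the $O(1)$ for Fig.~\ref{fig:Ex1} really is independent of the elided loop body (which requires only that the body contain no \lstinline[language=Solidity]{return} or \lstinline[language=Solidity]{revert} that would already bound the original's run), and that, under Solidity's semantics, the rewritten guards are respectively statically false and never-failing. I do not expect this to be an obstacle — it is routine bookkeeping — and its being routine for these two fully spelled-out programs is precisely the point: what fails, and what the observation is really about, is the analogous ``routine'' reasoning carried out on the operators in isolation rather than on the program they transform.
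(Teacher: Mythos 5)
Your proposal is correct and matches the paper's own justification: the paper establishes this observation precisely by the two witness programs of Fig.~\ref{fig:Ex1} (an \gls{insert} of \lstinline[language=Solidity]{a = false;} that skips the loop and lowers gas) and Fig.~\ref{fig:Ex2} (a \gls{move} of the counter update that makes the loop run until out-of-gas), just as you do. Your added framing of cost as a path-sensitive, whole-program quantity and your handling of ``successive applications'' only make explicit what the paper leaves informal.
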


As mentioned earlier, one of the biggest challenges that need to be
addressed when using a genetic search approach for repairing smart contracts
is how to speed up the generation and validation processes of mutated versions.
We describe here a parallel search-based algorithm for efficiently generating patches.
We assume here we have three versions 
of the mutate function: $mutateM(C)$
which mutates the contract $C$ using only the \gls{move} operator,
$mutateR(C)$ which mutates the contract $C$ using only the \gls{replace} operator, and $mutateI(C)$
which mutates the contract $C$ using only the \gls{insert} operator.
 Since genetic repair approaches use mainly an exhaustive search algorithm
to generate a patch, it is highly desirable
to split the search space into sub-spaces. 
To do so, we use the mutate functions described above 
to split the search space
into 7 smaller spaces as described in the following.

\begin{itemize}

\item $[Space S_1]$: this search space consists of
the set of candidate patches that result from mutating the contract $C$ using only the function  $mutateM(C)$.

\item $[Space S_2]$: this  search space consists of
the set of candidate patches that result from mutating the contract $C$ using only the function $mutateR(C)$.

\item $[Space S_3]$: this search space consists of
the set of candidate patches that result from mutating the contract $C$ using only the function $mutateI(C)$.

\item $[Space S_4]$: this search space consists of
the set of candidate patches that result from mutating the contract $C$ using  the two functions $mutateM(C)$ and $mutateR(C)$.

\item $[Space S_5]$: this search space consists of
the set of candidate patches that result from mutating the contract $C$ using  the two functions $mutateM(C)$ and $mutateI(C)$.

\item $[Space S_6]$: this search space consists of
the set of candidate patches that result from mutating the contract $C$ using  the two functions $mutateR(C)$ and $mutateI(C)$.

\item $[Space S_7]$: this search space consists of
the set of candidate patches that result from mutating the contract $C$ using the functions $mutateM(C)$, $mutateR(C)$,  $mutateI(C)$.

\end{itemize}

Note that for the effectiveness of the parallel
algorithm we need to ensure that the search spaces
are mutually-exclusive spaces so that no redundant 
mutants are generated and validated across various spaces.
Recall that each mutant will be checked using the vulnerability detectors and against 
a set of test cases in addition to the gas usage requirement.
Such validation process can be computationally complex
specially when the search space of candidate patches is extremely large.

Mutants in $S_7$ are generated using
the nesting operation $\mathit{mutateX}(\mathit{mutateY} (\mathit{mutateZ} (C)))$, 
where $X, Y$, and $Z$ are distinct operators taken from the mutation domain
 $\{Move, Replace, Insert\}$. 
Assume $C_1 = \mathit{mutateZ} (C),  C_2 = \mathit{mutateY} (\mathit{mutateZ} (C))$, $C_3 = \mathit{mutateX}(\mathit{mutateY} (\mathit{mutateZ} (C)))$.
Then the validity function $V_{S_7} (C_3)$ for this search space $S_7$
 can be formalized as follows.
$$ \label{EqS7}
V_{S_7} (C_3)  = 
\begin{cases}
Accept &   \textrm{iff $ \mathit{diff}(C_1, C) >0 \land \mathit{diff} (C_2, C_1) > 0 $} \\ 
& \hspace*{10 pt}\textrm{$\land~ \mathit{diff} (C_2, C) > 0 \land  \mathit{diff} (C_3, C_2) > 0$} 
\\ 
& \hspace*{10 pt}\textrm{$\land~ \mathit{diff} (C_3, C) > 0 \land  \mathit{diff} (C_3, C_1) > 0$} 
\\
Reject & otherwise
\end{cases}
$$

Note that for a mutant to be added to the space $S_7$
it has to satisfy a somewhat complex condition.
This is necessary in order to avoid overlaps
with the other search spaces. 	
Similar validity functions are defined for 
the other sub-spaces to ensure the mutually-exclusive property of the sub-spaces (please see Theorem \ref{MXTheorem}).

\begin{definition} (\textbf{Properties of  splitting strategy}).
Let $S$ be the search space of possible mutants of a vulnerable smart 
contract $C$ generated using the operators \gls{move},
\gls{replace}, and \gls{insert}. 
The splitting strategy of $S$ into 
spaces $S_1,..., S_7$ satisfies the following properties

\begin{itemize}

\item disjointness: for any two distinct sets $S_i$ and $S_j$
such that ($i, j = 1,..., 7 \land i \neq j$) we have $S_i \cap S_j = \emptyset$.

\item completeness: $(S_1 \cup S_2 \cup ... \cup S_7) = S $.

\end{itemize}

\end{definition}

\begin{theorem} \label{MXTheorem}  Spaces $(S_1,..., S_7$) are mutually exclusive spaces.

\end{theorem}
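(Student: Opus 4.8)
The plan is to prove pairwise disjointness of $S_1,\dots,S_7$ by attaching to every mutant that survives the validity check of a space $S_k$ a canonical \emph{operator signature}, namely the set of those among the three operators Move, Replace, Insert that were applied with a non-null effect in the derivation that placed the mutant into $S_k$. Writing $A_1=\{\text{Move}\}$, $A_2=\{\text{Replace}\}$, $A_3=\{\text{Insert}\}$, $A_4=\{\text{Move},\text{Replace}\}$, $A_5=\{\text{Move},\text{Insert}\}$, $A_6=\{\text{Replace},\text{Insert}\}$, $A_7=\{\text{Move},\text{Replace},\text{Insert}\}$ for the seven non-empty subsets of the operator domain, the whole argument reduces to one invariant: a mutant accepted into $S_k$ has operator signature exactly $A_k$. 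Since $A_1,\dots,A_7$ are pairwise distinct, a mutant lying in both $S_i$ and $S_j$ would force $A_i=A_j$ and hence $i=j$, which is exactly mutual exclusivity; combined with the completeness clause already present in the splitting-strategy definition this would additionally show $S_1,\dots,S_7$ partition $S$.

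First I would check the invariant for the three singleton spaces. A mutant in $S_1$ is produced using only $\mathit{mutateM}$, so no Replace or Insert step appears, while its validity condition $\mathit{diff}(\cdot,C)>0$ rules out a derivation in which all Move steps have null effect; hence its signature is $\{\text{Move}\}=A_1$, and $S_2,S_3$ are symmetric. For the two-operator spaces, take $S_4$: a mutant there is $\mathit{mutateR}(\mathit{mutateM}(C))$ (or the reverse composition), and the two-operator analogue of the displayed formula for $V_{S_7}$ requires $\mathit{diff}(C_1,C)>0$ and $\mathit{diff}(C_2,C_1)>0$ --- so both the Move step and the Replace step had a non-null effect --- together with $\mathit{diff}(C_2,C)>0$, which forbids the two steps from cancelling; hence the signature is exactly $\{\text{Move},\text{Replace}\}=A_4$, and $S_5,S_6$ follow by symmetry. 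Finally, for $S_7$ the displayed validity formula asks that each of $\mathit{diff}(C_1,C)$, $\mathit{diff}(C_2,C_1)$, $\mathit{diff}(C_3,C_2)$ be positive (each of the three operators had an effect) and that each of $\mathit{diff}(C_2,C)$, $\mathit{diff}(C_3,C)$, $\mathit{diff}(C_3,C_1)$ be positive (no contiguous block of the applied operations collapses the result onto the output of a strictly smaller combination), so its signature is $\{\text{Move},\text{Replace},\text{Insert}\}=A_7$.

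The main obstacle --- and the reason $V_{S_7}$ carries six conjuncts rather than the three "each operator did something" ones --- is ruling out \emph{accidental collapses}: a derivation that textually uses, say, Insert then Replace then Move could in principle return a contract reachable by a shorter combination (for instance if the Insert and the Replace jointly undo each other), in which case the same mutant would be generated and validated in two spaces. The substance of the proof is therefore verifying that the extra conjuncts $\mathit{diff}(C_2,C)>0$, $\mathit{diff}(C_3,C)>0$, $\mathit{diff}(C_3,C_1)>0$ (and their analogues for $S_4,S_5,S_6$) eliminate exactly those degenerate derivations, so that the operator signature is well defined and equals the intended $A_k$. I would discharge this with three representative pair-types --- singleton vs.\ singleton, singleton vs.\ a two/three-operator space, and two-operator vs.\ a two/three-operator space --- because the remaining pairs among the $\binom{7}{2}=21$ are obtained from these by permuting the roles of Move, Replace and Insert; in each representative case the validity definition of one of the two spaces contains a $\mathit{diff}>0$ conjunct that contradicts the operator signature the other space imposes on the putative shared mutant.
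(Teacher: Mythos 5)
Your proposal takes essentially the same route as the paper's own (sketched) proof: both argue from the validity functions' $\mathit{diff}>0$ conjuncts that a mutant admitted to a space must have been effectively produced by exactly that space's operator combination, so no mutant can satisfy the validity conditions of two different spaces --- the paper casts this as a contradiction for $S_7$ against $S_1,\ldots,S_6$ and declares the remaining cases analogous, while you package the identical idea as an operator-signature invariant and dispose of the $\binom{7}{2}$ pairs by symmetry. The two arguments are the same in substance (and share the same level of detail on the degenerate-collapse cases), with yours being slightly more explicit about the role of the cross-term conjuncts such as $\mathit{diff}(C_2,C)>0$.
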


\begin{proof} (sketched).
To prove the theorem we need to consider many different cases as we have 7 spaces.
However, since the proof argument of all cases will be very similar
and for  brevity reason, we consider here only space $S_7$.
For this case, we need to show that $S_7 \cap S_j = \emptyset \mid j = 1...6$.
Hence, there are six possible sub-cases to consider.
Recall that the mutants in $S_7$ are generated using
the nesting operation $\mathit{mutateX}(\mathit{mutateY} (\mathit{mutateZ} (C)))$, where $X, Y$, and $Z$ are distinct operators taken from the mutation domain $\{Move, Replace, Insert\}$. 
The theorem can be proven by contradiction. 
\begin{itemize}

\item Let $S_i \cap S_7 \neq \emptyset \mid i \in \{1, 2,3\}$.
This implies that there exists a mutant $m$ that belongs to both $S_i$ and $S_7$.
Note that since $m$ belongs to $S_i$ then it is generated using a single mutate function of the form $mutateX$, where $X \in \{Move, Replace, Insert\}$.
It is easy to see then that the mutant $m$ cannot exist in the space $S_7$ as the addition of such mutant to $S_7$
contradicts with the definition of the validity function of the space $S_7$.

\item  Let $S_j \cap S_7 \neq \emptyset \mid j \in \{4, 5, 6\}$.
This implies that there exists a common mutant $m$ that belongs to both $S_j$ and $S_7$. Note that since $m$ belongs to $S_j$ then $m$ is generated 
from the nesting operation $mutateX (mutateY (C))$, where $X$ and $Y$
are distinct operators taken from the domain $\{Move, Replace, Insert\}$. Hence,
the mutant $m$ is generated using only two operators while ignoring
the effect of one of the three operators. Therefore,
the  mutant $m$ cannot exist in the space $S_7$
as this contradicts with the definition of the validity function of the space $S_7$
and the fact that mutants in $S_7$ are generated using the nesting operation $mutateX(mutateY (mutateZ (C)))$.

\end{itemize}
\end{proof}
 

\subsection{Parallel Repair Algorithm}
\label{sec:algo}

We now describe a parallel genetic repair framework
for vulnerable smart contracts.
The repair framework consists mainly
of eight processes running in parallel ($p_1 || p_2 ||... || p_8$):
the first seven processes ($p_1-p7$) are responsible for generating compilable candidate patches
of the given vulnerable smart contract corresponding to
the search spaces $(S_1- S_7)$
and the last process (process $p_8$)
is responsible for creating concurrent validation processes and selecting the most
preferable patches generated as the base version to be further mutated.
Such parallel repair framework would help
to generate plausible repairs
for vulnerable smart contracts in a much faster way than the repair framework
that generates and validates candidate patches in a traditional sequential order.

\algnewcommand\Sends{\textbf{Sends }}
\algnewcommand\Receives{\textbf{Receives }}
\algnewcommand\Requests{\textbf{Requests }}
\algnewcommand\Break{\textbf{Break}}
\algnewcommand\Terminate{\textbf{Terminate}}

\begin{algorithm} 
\begin{algorithmic}[1]

\While{$p_8$ is running and space is not exhausted}
\State $C_{base} := \Receives{\emph{Base Contract}\ from\ p_8}$
\While{Space $S_1$ is not exhausted}
\State $C_{new} := mutateM(C_{base})$
\If{$C_{new}$ is compilable}
    \State \Sends{$C_{new}\text{ to }p_8$}
    \State \Break
\EndIf
\EndWhile
\EndWhile
\State \Terminate
\end{algorithmic}
\caption{Repair process $p_1$ in our algorithm (process $p_1,\ldots,p_7$ explores part of search space, $S_1, \ldots,S_7$).}  
\label{alg:processP2}
\end{algorithm}

\begin{algorithm}[]
\begin{algorithmic}[1]
\State \textbf{Inputs} :   Vulnerable Contract $C$, Vulnerabilities $U$, Tests $T$
\State \textbf{Inputs} :  Initial Population size \verb+IP+, Generation size \verb+GR+, 
Maximum Population size $P_{\text{size}}$
\State \textbf{Inputs} :  Maximum Gas Usage Bound $L$
\State \textbf{Output} : Set of Plausible Patches
\State \verb+Patches+ := $\emptyset$
\For{$i := 1; i \le IP; i:= i +1$}\Comment{Each iteration executes in parallel}
    \State $C_{new} := \Requests{\text{new mutant of contract\ } C \text{\ from\ } p_1,\ldots,p_7}$
    \State $C_{new}.fitness := \mathit{Eval}(C_{new}, U, T)$
    \State \verb+Patches+ := \verb+Patches+ $\cup \{ C_{new} \}$
\EndFor
\While{(at least one of $p_1,\ldots,p_7$ has not terminated $\land$ timeout not reached)}
    
   \State $plausible := \mathit{Filter\_Plausible\_Patches}({\tt Patches}, U, T, L)$
    \If{plausible != $\emptyset$}
    \State \textbf{return} plausible
    \EndIf
    
    \State Patches := $\mathit{NSGA2Selection}({\tt Patches}, P_{\text{size}})$
    
    \For{$i := 1; i \le GR; i:= i+1$}\Comment{Each iteration executes in parallel}
        \State $C_{current\_best}$ := highest fitness patch from \verb+Patches+
        \State $C_{new} :=  \Requests{\text{new mutant of\ }} C_{current\_best}$ from $p_1,\ldots,p_7$
        \State $C_{new}.fitness := \mathit{Eval}(C_{new}, U, T)$
        \State \verb+Patches+ := \verb+Patches+ $\cup \{ C_{new} \}$
    \EndFor
\EndWhile
\State \textbf{return} $\emptyset$


\end{algorithmic}
\caption{Main Repair Algorithm (process $p_8$ which combines results from processes $p_1,\ldots,p_7$) }  
\label{alg:processP8 }
\end{algorithm}

Each process $p_i\mid i \in \{1, \ldots, 7\}$ is a long-running process. In each iteration, it waits for $p_8$ to send patch generation request. Upon the request is received with a base version of the vulnerable smart contract, the processes will then search for a compilable patch mutated from the received base version. 
The processes use mainly
the set of mutation functions: $mutateM (C), mutateR (C)$ and $mutateI (C)$
which mutate the vulnerable contract
using some mutation operators. Every mutant is then be checked for their syntactic correctness via the use of a compiler. This technique has been shown very effective in early rejecting invalid patches. After the first compilable patch is generated, the process will then send it back to $p_8$ and wait for the next request.
However, since the implementations of
processes $p_1, \ldots, p_7$
are very similar, we present here
the pseudo-code of one of them for brevity
(we choose process $p_1$ that corresponds to the search space $S_1$).
For readability, let us assume that we can get a fresh (new) mutant every time the function $mutateM(C)$ is used.
The pseudo-code of $p_1$ is given in Algorithm \ref{alg:processP2}.
As one can see, Algorithm \ref{alg:processP2} can consider
all possible combinations of random mutations
of the function  $mutateM (C)$ on the contract $C$ until the corresponding patch space $S_1$ is exhausted.

We now discuss the implementation
of the main process $p_8$ (Algorithm \ref{alg:processP8 }).
This process takes as inputs:
the original vulnerable smart contract $C$,
the set of targeted vulnerabilities $U$,
and the set of test cases $T$,
then returns the patches that meets the quality requirements (plausible patches that pass given tests $T$ and do not exhibit given vulnerabilities $U$). At the beginning, we conduct a population bootstrapping that a set of mutants is generated to have the initial set of mutants. The size of the set is controlled by the parameter \verb+IP+ (Initial Population Size). At the time new mutants should be generated, $p_8$ sends requests to the processes $p_1, \ldots, p_7$ (the \textbf{Requests} operation in Algorithm \ref{alg:processP8 }). Whenever one of the processes has generated a new compilable mutant, all other mutant generation processes will stop attempting to generate new mutants and the request is fulfilled. The $\mathit{Eval}$ is used to calculate the fitness value of the patches. The objective functions are defined in \autoref{table:objectives}. Note that all the objective functions are independent from one to the other, the $\mathit{Eval}$ function therefore also issues new concurrent processes to speed up the patch fitness evaluation process. The control flow then enters the main loop. In each iteration, the algorithm first checks if there is already plausible patch existing in the maintained set of patches; this is accomplished by invoking the function {\it Filter\_Plausible\_Patches}. If it exists, this algorithm returns immediately the plausible patch. Otherwise, the maintain set of patches will be trimmed to the size $P_{\text{size}}$ by the NSGA2 population selection algorithm\cite{nsga2} and yet another set of patches will be generated in the similar fashion. The base version used to generate the new set of patches is chosen to be the best patch among all the patches in the maintained set \verb+Patches+. The evaluation of relative quality between patches is based on their fitness values. In each iteration of the main loop, the number of new patches will be generated is determined by the parameter \verb+GR+ (Generation Rate).

We employ a timer in $p_8$ (not shown in pseudo-code for simplicity)
which will be used to enforce termination of the process in
case the time spent in the search process exceeds the bound $MaxBound$.
The bound $MaxBound$ should be chosen while taking into
consideration the number of test cases, the size of the buggy program,
and the estimated number of mutants in the search space assigned to the process.
Note that processes work independently and terminate whenever a plausible patch is found or that the timer is fired.

\paragraph*{Objectives or Fitness Functions} 
As mentioned earlier, the size 
of the search space can be extremely large
even for programs whose source code size is small.
Recall that the search space grows exponentially
with the considered lines of code
and hence the efficiency and performance
of the genetic repair algorithm needs to be improved
when examining candidate patches in the generated search space.
While the parallel repair algorithm
splits the large search space
into smaller sub-spaces which improves considerably the patch generation process,
the search sub-spaces can be still huge
to be exhaustively explored in a reasonable time budget.
The goal of the employed fitness functions
is to guide the search towards plausible repair.
We therefore integrate four fitness functions
(objectives) with the patch generation process.
The objectives are classified into primary objectives and secondary objectives. Primary objectives are related to the functional or correctness properties of the patch,
while secondary objectives are related to the non-functional properties of the patch. 
The two main functional correctness objectives are the number of targeted vulnerabilities and the number of failing test cases. The number of targeted vulnerabilities can be retrieved from any smart contract vulnerability detector (e.g. Oyente \cite{Luu2016}) while test cases can be provided by the vulnerable contract developers.
The secondary properties
or non-functional properties include the number of mutation operators applied on the generated patch and the gas usage or the cost of the patch. 
 The designated fitness functions measure how many of desired functional and non-functional requirements a generated mutant meets.
The mutation distance of the generated mutant
from the original vulnerable contract is measured by counting the number of times the mutation operators applied to the generated mutant.
This can be used to measure the simplicity of the generated mutant.
The average gas usage is compared by the methodology described in \autoref{GasAnalysis}.
The two secondary objectives
are considered only when the generated patch is valid
(fixed all targeted vulnerabilities and passes all test cases).
Note that we give higher preference to a patch that
fixes all detected vulnerabilities and passes all test cases with lower
average gas usage and smaller number of syntactical changes 
 w.r.t. the original vulnerable contract.
 We summarize these objectives (fitness functions) in Table \ref{table:objectives}.

\begin{table}[H]
\caption{Objectives (fitness functions) used when generating patches}
\label{table:objectives}
\scalebox{0.9}{
\begin{tabular}{|l|l|l|l|}
\hline
    \textbf{Description of objective} & \textbf{Objective Purpose} & \textbf{Objective Type} & \textbf{Importance} \\ \hline
     Number of targeted vulnerabilities & Patch correctness  &  Functional & Primary \\ \hline
     Number of failing test cases &  Patch correctness &  Functional&  Primary \\ \hline
     Gas consumption  &  Patch gas optimization &  Non-functional&  Secondary \\ \hline
    Mutation operation distance & Patch simplicity & Non-functional&  Secondary \\ \hline
\end{tabular}
}
\end{table}


\bigskip
\section{CHOOSING PATCH WITH LOWER GAS CONSUMPTION} \label{GasAnalysis}

One of the key challenges we encounter in this work
is how to compare efficiently the average gas usage between the original contract
and the repaired contract and how to compare the average gas usage
of different generated patches of a given vulnerable contract.
In general, the gas cost of a smart contract depends on 
 a number of parameters
including memory cost, stack cost, and storage cost
in addition to the instructions' costs.
Hence, the gas consumption of a given path $\pi$
in a smart contract $SC$ can be a non-constant.
It should be therefore described 
as a parametric  formula
that takes into consideration
the parameters that affect the 
gas consumption of the path. We call the described parametric formula as \emph{gas formula}.
 
 To compare the average gas usage of two smart contracts, we propose the notion of \textit{gas dominance}.
The goal of the introduced gas dominance notion
is to rank edited contracts (generated repairs of vulnerable contracts)
based on their corresponding gas formula as an estimation on the relative average gas usage. This estimation is required as we cannot predict in advance the true average gas usage over their lifespan. Such a ranking approach can be used to select a low-cost repair for a vulnerable smart contract from the set of proposed repairs generated by the parallel repair algorithm.


\subsection{The Gas Dominance Relationship}

When formalizing the gas usage of smart contracts, 
we choose  the specification of the gas cost function in the current Ethereum virtual machine specification  (version EIP-150) \cite{wood2014ethereum} at the time of writing of this paper. From a high-level perspective, the gas usage of a single invocation to the smart contract depends on the user input to the  smart contract, the blockchain environment, and the code of the smart contract. The gas usage of an execution  (a transaction) to a  smart contract  is the sum of the gas usage of each executed instruction along the execution path.
Formally, the gas cost function $C$ of an instruction $inst$ can be defined as

\begin{equation}
\label{equ:GU_inst}
    C(\sigma_{inst}, \mu_{inst}, I) = GU_{\text{OPCODE}_{inst}}(\sigma_{inst}, \mu_{inst}, I)
    + GU_{\text{New Memory}}(\sigma_{inst}, \mu_{inst}, I)
\end{equation}

where $\mathbf{\sigma_{inst}}$ is the blockchain world state  before the instruction $inst$ is executed and $\mathbf{\mu_{inst}}$ is the machine state  before $inst$ is executed,  the operation code $\text{OPCODE}_{inst} = I.code[\mu_{pc}]$ is a property of the execution environment $I$ indexed by a program counter $\mu_{pc}$, and $GU_{\text{OPCODE}_{inst}}$ is the gas formula associated to the operation code of $inst$ and $GU_{\text{New Memory}}$ is the gas usage formula associated to the expansion of machine memory when executing the instruction $inst$. For more technical details about the definition of the gas cost function, we refer the reader to \cite{wood2014ethereum}.
 
The total gas usage of an invocation (in the form of a single transaction) with the execution information specified in $I$ can be defined as a gas function corresponding to the visited contract path triggered by the inputs:

\begin{equation}
\label{equ:GU_path}
GU_{\text{path}}(\sigma_p, \mu_p, I) = \sum_{inst \in \mathbf{Insts}} C(\sigma_{inst}, \mu_{inst}, I)
\end{equation}

where $\mathbf{Insts} = (inst_0, inst_1, inst_2, \ldots)$
the sequence of instructions in the execution path determined by $\sigma_{p}$, $\mu_{p}$ and $I$, and $\sigma_{p} = \sigma_{inst_0}$, and $\mu_{p} = \mu_{inst_0}$. 
For a smart contract with $k$ execution paths, we construct $k$ gas usage
functions, e.g. $GU_{path_1}, \ldots,  GU_{path_k}$ . 
We can then express the total gas usage of a smart contract $SC$ over its lifespan as follows:

\begin{equation}
GU_{\text{lifespan, SC}} = \sum_{t \in trans} GU_{trans}{(t)}{(\sigma_{t}, \mu_{t}, I_t)}
\end{equation}

where $trans$ is the set of transactions to smart contract (denoted by $SC$) over its lifespan (the history of transactions of $SC$), and $\sigma_t$, $\mu_t$ and $I_t$ are the world state, machine state and execution environment respectively when the first instruction of the invocation corresponding to transaction $t$ was executed. We introduce a new higher order function $GU_{trans}$ here that maps a transaction to its corresponding gas usage function. Suppose the execution path is $\pi$ for the transaction $t$, then $GU_{trans}(t) = GU_{path_{\pi}}$.

Given two repaired versions $SC_a$ and $SC_b$ for a vulnerable
smart contract $SC$ addressing the same vulnerabilities, we then favor the version with lower lifespan gas usage.
However, since the future blockchain world state and the user inputs to $SC$ can be of any possible combination which are generally unknown in advance, concrete lifespan gas usage of patched versions cannot be used to compare effectively the average gas usage of patches. We therefore propose to use what we call \emph{gas dominance} as a method to compare the relative gas-efficiency between two patches by comparing the expected gas usage functions of them.
So that for a  given a smart contract $SC_a$ with $k$ execution paths, we can express the expected gas usage of $SC_a$ as follows:

\begin{equation}
\label{equ:GU_expected}
GU_E(SC_a)(\sigma, \mu, I) = \sum_{i = 0}^{k}  \mathbf{P}_i * GU_{path_i}(\sigma, \mu, I)
\end{equation}

where $\mathbf{P}_i$ is the probability of $path_i$ being visited by an arbitrary execution of $SC_a$, $GU_{path_i}$ is the gas usage function corresponds to program path $path_i$.
For the cases where the contract paths invoke external functions, we need  to include the gas usage introduced by the external function invocations in the equation of $GU_E(SC_a)$ of the contract. 

\begin{definition} (\textbf{Gas Dominance Relation}). Given two smart contracts $SC_a$ and $SC_b$, we say $SC_a$ gas dominates $SC_b$ (denoted by $SC_a >_{g} SC_b$) if and only if $GU_E(SC_a) \le GU_E(SC_b)$ for all inputs and $GU_E(SC_a) < GU_E(SC_b)$ for at least one input to the smart contracts. 

Formally, 
\begin{equation}
\begin{aligned}
    SC_a >_{g} SC_b
        \iff \forall \sigma,\mu,I ~{( GU_{E_{a}}(\sigma, \mu, I) \le GU_{E_{b}}(\sigma, \mu, I) )} \land \\
        \exists \sigma,\mu,I ~{( GU_{E_{a}}(\sigma,\mu,I) < GU_{E_{b}}(\sigma,\mu,I) )}
\end{aligned}
\end{equation}

where $GU_{E_{a}} = GU_E(SC_a)$ and $GU_{E_{b}} = GU_E(SC_b)$
\end{definition}

The gas dominance relation has the following properties:

\begin{property}[Irreflexive]
    For all smart contracts $SC$, they do not gas dominate themselves. That is, $SC$ must not gas dominate $SC$.
\end{property}

\begin{property}[Asymmetric]
For two arbitrary smart contracts $SC_a$ and $SC_b$, if $SC_a$ gas dominates $SC_b$, then $SC_b$ must not gas dominate $SC_a$.
\end{property}

\begin{property}[Transitive]
For three arbitrary smart contracts $SC_a$, $SC_b$ and $SC_c$, $SC_a$ gas dominates $SC_b$ and $SC_b$ gas dominates $SC_c$, then $SC_a$ must gas dominate $SC_c$.
\end{property}

\subsection{Lightweight Approximation for Determining Gas Dominance Relationship}

In general, determining the gas dominance relationship between two smart contracts can be a computationally complex task and practically infeasible because the possible input space is generally too enormous.
We therefore develop a lightweight approximation approach  based on the notion of function dominance.  We say that one gas formula dominates another formula
if the magnitude of the ratio of the first formula
to the second increases without bound as the inputs increase without bound.
There are different  ways to compare the gas consumption  
of two smart contracts and we describe here two approaches.

Given two contracts $SC_a$ and $SC_b$, we first construct the expected gas usage
formulas for $SC_a$ and $SC_b$, namely  $GU_{E} (SC_a)$ and  $GU_{E} (SC_b)$.
We then transform the equations  $GU_{E} (SC_a)$ and  $GU_{E} (SC_b)$
into polynomial expressions. Due to the fact that there might be terms containing non-polynomial functions,  we use  a {\em substitution mapping} to transform
the gas formula into a polynomial expression. The substitution mapping is constructed as follows.
\begin{enumerate}
    \item For all monomial terms, they are unchanged.
    \item For other terms, the coefficient remains unchanged while the other parts of the term is mapped to a unique fresh variable.
\end{enumerate}
All common non-monomial terms in $GU_{E} (SC_a)$ and  $GU_{E} (SC_b)$ are mapped to the same fresh variable, that is, variable binding of the fresh variables are maintained for the substitution mappings e.g. if the formula $x^2 + sin(x)$ is substitution mapped to $x^2 + y$, the formula $x^2 + cos(x) + 3sin(x)$ should be substitution mapped to $x^2 + z + 3y$.
A polynomial can be expressed as a sum of monomials where each monomial is called a term. The degree of the polynomial is the greatest degree of its terms. 
We denote the resulting polynomial equation for  $GU_{E} (SC_a)$
by  $GU_{E}^{poly} (SC_a)$ and the resulting polynomial equation for  $GU_{E} (SC_b)$ by  $GU_{E}^{poly} (SC_b)$.
We then rearrange and  simplify the resulting polynomial equations $GU_{E}^{poly} (SC_a)$  and $GU_{E}^{poly} (SC_b)$ as a sum of monomials.
Let $M_{SC_a}$ and $M_{SC_b}$ be the sets of monomials in $GU_{E}^{poly} (SC_a)$  and $GU_{E}^{poly} (SC_b)$.
We can determine the gas dominance relationship between $SC_a$ and $SC_b$ as follows (apply in order).
\begin{enumerate}
    \item If $|M_{SC_a}| \ne |M_{SC_b}|$, then $SC_a$ and $SC_b$ are not gas dominating each other.
    \item Let $V_{SC_a}$ and $V_{SC_b}$ be the vectors of coefficients of  $GU_{E}^{poly} (SC_a)$ and  $GU_{E}^{poly} (SC_b)$ respectively so that the order of elements of $V_{SC_a}$ and $V_{SC_b}$ should be aligned according to the same corresponding monomials.
    \begin{enumerate}
        \item If $V_{SC_a} \le V_{SC_b}$ (all elements in $V_{SC_a}$ are less than or equal to the corresponding elements in $V_{SC_b}$ and $V_{SC_a} \ne V_{SC_b}$), then $SC_a >_{g} SC_b$.
        \item If $V_{SC_a} = V_{SC_b}$, then $SC_a$ and $SC_b$ are not gas dominating each other.
        \item If $V_{SC_a} \ge V_{SC_b}$ (all elements in $V_{SC_a}$ are greater than or equal to the corresponding elements in $V_{SC_b}$ and $V_{SC_a} \ne V_{SC_b}$), then $SC_b >_{g} SC_a$.
        \item if none of the above conditions hold, then $SC_a$ and $SC_b$ are not dominating each other.
    \end{enumerate}
\end{enumerate}


\subsection{Integrating Gas Dominance Relationship into Genetic Patch Search Process}

The above defined gas dominance relationship is for comparing the relative average gas consumption between two versions of the vulnerable contract. To enable the comparison among multiple patched versions of the original vulnerable contract, we here define the notion of \emph{gas dominance level}, as defined in the following.
\begin{definition} (\textbf{Gas Dominance Level}). Given a set of smart contracts, non-dominated sorting \cite{nsga2} is performed based on the gas dominance relationship. The gas dominance level of an arbitrary smart contract in the set is defined as its ranking in the non-dominated sorting result.
\end{definition}
The multi-objective genetic algorithm can now use the gas dominance level as one of the objectives, which serves to implicitly capture the effect of patches on the gas consumption (without having to compute the gas consumption directly).

\subsection{Accelerating Gas Comparison by Generating 
Reduced Gas formulas} \label{sec:accelartion}

As described in the preceding, to compare the gas usage of two contracts we
need first to synthesize gas formulas for the set of feasible paths in  each contract.
Note that the number of gas formulas generated for each patch
can affect the computational complexity of the gas comparative approach dramatically.
Suppose that the parallel genetic algorithm
generates three plausible patches for a vulnerable contract $C$,
namely $C_1, C_2$ and $C_3$. 
However, to compare efficiently the gas usage of the contracts  $C_1, C_2$ and $C_3$
we only need to synthesize gas formulas for
the set of different paths in the three contracts.
It is sufficient to  conduct a comparison between reduced versions of these contracts by skipping joint or common paths.
This helps to reduce the computational complexity
of the comparative approach.







\begin{remark}
Syntactically identical paths among contracts share the same gas formula
and therefore can be safely skipped during comparison. 
\end{remark}


\begin{definition} (\textbf{Classifying paths in  contracts}).
Let $C$ be a vulnerable smart contract
and $C^{'}$ be a repaired versions of $C$ 
obtained by the parallel repair algorithm.
A feasible path $\pi$ in $C^{'}$  can
be classified into one of the following categories 

\begin{itemize}

\item $\pi$  is a repaired path of some paths in $C$, or

\item $\pi$  is a new path w.r.t. the set of feasible paths in $C$, or

\item $\pi$  is a joint or common path between $C$ and $C^{'}$.

\end{itemize}

\end{definition}

Note that a patch introduces to a given vulnerable smart contract may
trigger a new set of paths that were infeasible in the original vulnerable smart contract.
Thus, a repaired version of a contract may have new set of behaviors w.r.t. the original contract.
This may happen for example when the patch updates an expression
in a conditional statement in the original vulnerable contract.
The advantages of distinction between the above three classes of paths
are two-fold.
First, it helps to reduce the number of paths that need to be considered
when comparing the contracts and hence the number of
gas formulas that need to be synthesized. Second, it helps to reduce
the complexity of the final gas formulas of the contacts
being compared.
Note that since we use a genetic algorithm
based on three mutation operators (\gls{move}, \gls{insert}, and \gls{replace}), we can easily then classify paths in
the contracts being compared into three categories: repaired paths,
joint paths, or new paths.
Typically, we can identify the locations of buggy statements in the contract 
and we can augment the repairing algorithm to label the locations
of statements that have been influenced by the deployed patch.
This facilitates the classification of paths
in the generated repaired contract w.r.t. the original contract.

We now turn to describe  an acceleration technique 
that can be applied before conducting the actual
comparison between two similar contracts $C$ and $C^{'}$.
Let us denote the set of feasible paths in the two
contracts by $ \Pi_{C}$ and $ \Pi_{C^{'}}$.
The goal of the acceleration technique is  to generate reduced versions 
of the contracts $C$ and $C^{'}$ as follows:

\begin{enumerate}

\item Compute the sets of paths that are unique in each contract as follows

$$
\mathit{Diff}(C, C^{'}) = (\Pi_{C} \setminus \Pi_{C^{'}})
$$
$$
\mathit{Diff}(C^{'}, C) = (\Pi_{C^{'}} \setminus \Pi_{C})
$$

\item  Synthesize a gas formula for each path in the sets $\mathit{Diff}(C, C^{'})$ and $\mathit{Diff}(C^{'}, C)$ using Equation (\ref{equ:GU_path}) and then compute the final gas  formula by summing the resulting gas formula using Equation (\ref{equ:GU_expected}).


\item Compare the resulting gas formulas 
using the comparative approach described at Section \ref{GasAnalysis}.

\end{enumerate}

Comparing the gas usage
of two contracts using their reduced versions 
(i.e., versions obtained by skipping joint paths or repaired
paths whose gas formulas are equivalent)
preserves soundness, as described in the following theorem.

\begin{theorem} (\textbf{Soundness of reduction}).
Let $C$ be a vulnerable smart contract and $C^{'}$
be a repaired version of $C$. 
Let also $G(C)$ and $G(C^{'})$ be  gas formulas
for $C$ and  $C^{'}$ respectively
and $G(C_R)$ and $G(C_R^{'})$
be gas formulas for reduced versions of $C$ and  $C^{'}$
obtained as described at Section \ref{sec:accelartion}.
$G(C_R)$ dominates $G(C_R^{'})$ if and only if
$G(C)$ dominates $G(C^{'})$.
\end{theorem}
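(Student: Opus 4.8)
The plan is to exploit the additive structure of the expected gas formula in Equation~(\ref{equ:GU_expected}) together with the partition of feasible paths used by the acceleration of Section~\ref{sec:accelartion}. I would first write the feasible-path set of $C$ as the disjoint union $\Pi_{C} = \mathit{Diff}(C,C') \cup (\Pi_{C} \cap \Pi_{C'})$ and likewise $\Pi_{C'} = \mathit{Diff}(C',C) \cup (\Pi_{C} \cap \Pi_{C'})$, the second term in each union being exactly the joint/common paths. Substituting into Equation~(\ref{equ:GU_expected}) splits each expected-gas formula as $G(C) = G(C_R) + K_{C}$ and $G(C') = G(C'_R) + K_{C'}$, where $G(C_R)$ and $G(C'_R)$ are precisely the formulas produced in Step~2 of Section~\ref{sec:accelartion} from the unique paths, and $K_{C}$, $K_{C'}$ collect the contributions of the common paths.

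The heart of the argument is the claim $K_{C} = K_{C'}$, i.e. the common paths contribute the same formula to both expected-gas expressions. For a common path $\pi$, its per-path formula $GU_{path_\pi}$ is the same in both contracts by the Remark above (syntactically identical paths share the same gas formula); moreover a joint/common path is by definition triggered by the same set of inputs in $C$ and in $C'$, hence visited with the same probability $\mathbf{P}_\pi$ under the (fixed) blockchain input distribution. Therefore the two sums over $\Pi_{C} \cap \Pi_{C'}$ agree term by term and $K := K_{C} = K_{C'}$.

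It then remains to observe that adding a fixed formula $K$ to both operands preserves and reflects the gas-dominance relation. Semantically this is immediate from the definition of $>_{g}$: for every $(\sigma,\mu,I)$ we have $GU_{E}(C_R)(\sigma,\mu,I) \le GU_{E}(C'_R)(\sigma,\mu,I)$ iff $GU_{E}(C_R)(\sigma,\mu,I) + K(\sigma,\mu,I) \le GU_{E}(C'_R)(\sigma,\mu,I) + K(\sigma,\mu,I)$, and similarly for strict inequality at a point, so $X \mapsto X + K$ is an embedding for the pointwise order and hence carries $>_{g}$ and ``not dominating each other'' across in both directions. For the lightweight polynomial approximation one checks the same thing at the level of monomial vectors: since gas coefficients are non-negative there is no cancellation, so extending $G(C_R)$ and $G(C'_R)$ by the common part $K$ merely adds the monomials of $K$ to both monomial sets with equal coefficients, leaving the size test on $|M|$ and the aligned coefficient-vector comparison ($\le$, $=$, $\ge$, or incomparable) unchanged. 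Chaining these biconditionals gives $G(C_R) >_{g} G(C'_R) \iff G(C) >_{g} G(C')$.

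The step I expect to be the main obstacle is the justification of $K_{C} = K_{C'}$ --- concretely, arguing that a common path is visited with the same probability in $C$ and $C'$. This requires reading the classification ``joint/common path'' as preserving the triggering inputs (and assuming the input distribution is shared between the two versions), not merely as syntactic equality of the instruction sequence; stating and using this as an explicit lemma is what makes the reduction sound. A secondary subtlety is the polynomial-approximation case, where one must ensure that extending both reduced formulas by $K$ does not merge or split monomials in a way that breaks the alignment used by the coefficient comparison, which again hinges on non-negativity of gas costs.
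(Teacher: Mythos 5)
Your plan follows essentially the same route as the paper, which justifies the theorem only by the remark that syntactically identical (joint) paths share the same gas formula, so that only the differing paths can affect the dominance comparison; your decomposition $G(C)=G(C_R)+K_C$, $G(C')=G(C'_R)+K_{C'}$ with cancellation of the common part and translation-invariance of the pointwise order is a faithful formalization of exactly that argument. You are also right to flag the visit-probability issue for common paths as the real hidden assumption --- the paper's own justification silently assumes it, so your making it an explicit lemma is an improvement rather than a divergence.
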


\begin{remark}
 (\textbf{Effectiveness of reduction}).
The  accelerated comparative approach of smart contracts
has lower computational complexity than the
non-accelerated comparative approach.
The amount of reduction on the computational complexity that can be obtained depends on the number of joint and repaired paths
in the contracts being compared that can be skipped safely during the comparison
(i.e., without adversely affecting the outcome of comparison).
\end{remark}

The number of generated gas sub-formulas (for paths)
and the complexity of the final gas formula (for the contract)
can be significantly reduced if the acceleration approach is employed.
This is crucial as synthesizing gas formulas for paths
can be an expensive step specially for paths with cyclic behavior.
Note that comparing reduced versions of contracts using simplified or reduced gas formulas that consider only different paths
in the two contracts does not affect the soundness of the analysis.
This is mainly due to the observation that only the set of different
paths in the contracts can make the gas consumption of
 a contract dominates the other.

\section{Implementation} \label{sec: implementation}

In this section, we describe the implementation of the \textsc{ SCRepair} tool, as well as the setup of the experimental evaluation (the results from the experiments will appear in the next section).

\subsection{Prototype implementation}

To evaluate our presented repair approach
for vulnerable smart contracts, we have implemented
a tool called \textsc{ SCRepair}.
The tool interacts and takes in inputs from the  smart contract security analyzers 
Slither \cite{slither} and Oyente \cite{Luu2016} in order to analyze and detect 
security vulnerabilities (if any) in the subject smart contracts. 
The tool Slither is a static analysis based detector which is able to reliably detect various vulnerabilities within a short time due to the lightweight nature of static program analysis. While being lightweight, it has been showing promising accuracy practically and used by the industry. On the other hand, the other supported vulnerability detector tool Oyente is a symbolic execution tool that works directly with Ethereum virtual machine code. It is able to detect many commonly occurring security flaws of Ethereum smart contracts. Notably, it can detect integer overflow and transaction order dependence vulnerabilities for which are difficult to be detected with pure static program analysis due to the need to reason about dynamic program behaviors. The fault localization information provided by the both vulnerability detection tools is used for fix localization in our search-based repair engine.

Since our repairing approach 
aims not only to fix the vulnerability but also
to optimize the gas usage of the patched 
vulnerable smart contract, we have built a gas analyzer based on Oyente in a way such that it can generate the information for determining the approximate gas dominance relationship. For determining gas dominance, it has a gas usage model extended from the Oyente implementation which is closer to the actual Ethereum virtual machine's gas model.

In Fig. \ref{fig:System-Component} we give  the
schematic diagram of our smart contract repair tool in which we describe the main components of the tool.
The tool consists  of five units: the vulnerability
detector, the test case executor, the gas ranker, the patch generator, and the main process controller. Units except the main process controller are executed in worker processes/threads. The main process controller unit manages all the worker processes and threads to perform the repair task in a concurrent manner. 

We have also implemented a variant of SCRepair with an unguided random search repair algorithm called \textsc{SCRepair-URS}. This implementation mostly reuses the SCRepair implementation except that the genetic search mechanism is removed. The patch evaluation can now be terminated early as long as it has sufficient information to assert that the patch under validation is not plausible ({\em e.g.}, as soon as a vulnerability is detected or a test case failed under the patched version). This acts as an optimization which is not possible to apply to genetic repair algorithm since early termination does not give the algorithm an evaluation of the fitness functions ({\em e.g.}, does not generate total number of test cases failed).

\begin{figure}
    \centering
    \includegraphics[width=\textwidth]{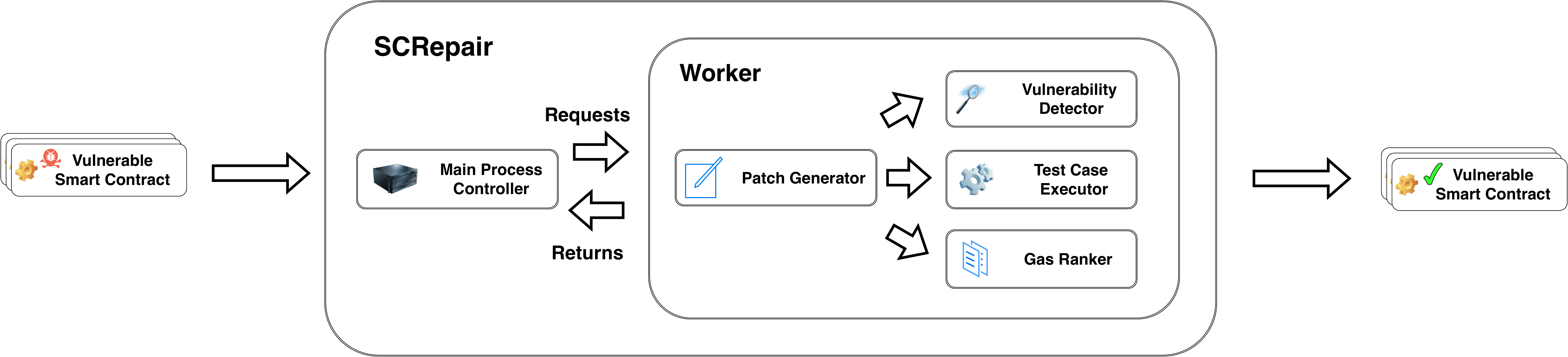}
    \caption{The schematic diagram of the SCRepair tool}
    \label{fig:System-Component}
\end{figure}


\subsection{Etherscan Vulnerable Dataset (EV-DS)} 
\label{sec:dataset}

To evaluate our repair approach, we have constructed 
a dataset of vulnerable smart contracts taken mainly from Etherscan as a proxy to real-world deployed \gls{sc} source code.
Etherscan is a well-known block explorer, search, API and analytic platform for
Ethereum mainnet, which is the main network wherein actual transactions
of smart contracts take place on a distributed ledger. 
A large amount of information related 
to the smart contracts can be extracted from
Etherscan, e.g., deployment address, verified source code, byte-code and application
binary interface (ABI) of deployed contracts.
From Etherscan we first collected all 38,225 available smart contract source code files which correspond to smart contracts deployed before 1st August 2018. After de-duplication,
34,400 files remained.
These source files are then analysed using the tool Oyente.
We obtained 2,752 vulnerable smart
contracts with different types of vulnerabilities. Four types of vulnerabilities have been detected on this dataset: \gls{tod},
\gls{re}, \gls{ed}, and \gls{io}.

The \gls{tod} happens when the user of a smart contract assumes a particular state of a contract, which may not exist when his transaction is processed potentially leading to malicious behavior. Reentrancy vulnerability is probably the most widely known vulnerability as it led to the infamous DAO attack (causing a loss of 60 million US dollars in June 2016). 
\Gls{re} happens when a contract is called by another contract
so that the original contract has to wait for the call to finish. 
This intermediate state can be exploited.
The contracts may suffer also from the so-called exception disorder (\gls{ed}) vulnerability
where the contract does not check explicitly whether 
the send operations have been completed successfully. Integer overflow (IO) is a common problem across all systems which could be used to modify the program state in an unwanted manner by deliberately providing large numbers as inputs leading to incorrect results being calculated in mathematical operations.

While the selection of smart contracts shown in 
Tables \ref{table: regressionTestCases} and \ref{table:RQ1-result} has been made randomly from the  dataset EV-DS, 
we have considered some key criteria
when selecting these smart contracts. 
The two main criteria we considered
are: (1) the size and complexity of the vulnerable smart contract
measured mainly in terms of the number of lines in the contract,
and (2) the popularity and the number of available
transactions of each vulnerable smart contract. Therefore, we filter the dataset according to the following rules before random sampling:
\begin{enumerate}
    \item Number of lines of code (exclude comments) > 30
    \item Number of transactions (at the time data collection was performed) > 30
\end{enumerate}

\subsection{Test Case Generation for EV-DS Dataset}


Since test cases for smart contracts are in general not available on the blockchain and  the authors of the deployed smart contracts are also not contactable \cite{Luu2016}, we therefore use a novel method to generate regression test cases from the available transactions to the subjects smart contracts on the blockchain. For every transaction (denoted by $t$) to the subject smart contract, we capture the inputs and the changes to the blockchain state during the execution of $t$ which then considered as the inputs and expected behaviors of the generated regression test case. A generated regression test case for a transaction $t$ contains the following elements:

\begin{enumerate}
    \item \label{tc:prevState} Blockchain state before executing the transaction $t$.
    \item \label{tc:func} The function being invoked and the corresponding argument values.
    \item \label{tc:aftState} Blockchain state after executing the transaction $t$.
    \item \label{tc:ret} The return values of invoked functions.
\end{enumerate}

However, as the whole blockchain state can be very huge (in the magnitude of terabytes), it is impractical to simply store relevant versions of the blockchain state. To address this issue, we only capture relevant states of  the Ethereum  accounts in the blockchain  before and after the execution of the transaction $t$. The generation of each regression test case is then run against the original vulnerable smart contract  to check the validity of the newly generated test case. During the test case generation process, we have set a timeout bound of 5 minutes for the execution time of each regression test case. Regression tests requiring longer time are terminated and discarded. \autoref{table: regressionTestCases} shows the number of regression test cases generated for each subject contract. The generated regression test cases are then used in the automated repair experiments.

\begin{table}[]
\caption{EV-DS dataset subject smart contract information}
\scalebox{0.8}{
\begin{tabular}{|l|c|c|c|c|}
\hline
\textbf{Name of Contract} & \textbf{\# Lines} & \textbf{\# Transactions} & \textbf{\# Regression} & \textbf{Supported by} \\
 & & & \textbf{Tests} & \textbf{SCRepair}
 \\ \hline
Autonio ICO & 330 & 34 & 31 & Yes \\ \hline
Airdrop & 62 & 147 & 7 & Yes \\ \hline
Banana Coin & 117 & 360 & 24 & Yes \\ \hline
XGold Coin & 272 & 308 & 304 & Yes \\ \hline
Flight Delay Issuance & 429 & 80 & 1 & No \\ \hline
Hodbo Crowdsale & 268 & 36 & 18 & Yes \\ \hline
Lescoin Presale & 351 &115 & 107 & Yes \\ \hline
Classy Coin & 217 & 574 & 495 & Yes \\ \hline
Yobcoin Crowdsale & 481 &515 & 435 & Yes \\ \hline
Classy Coin Airdrop & 49 & 137 & 4 & Yes \\ \hline
OKO Token ICO & 232 & 179 & 173 & Yes \\ \hline
ApplauseCash Crowdsale & 407 & 43 & 42 & Yes \\ \hline
HDL Presale & 239 &94 & 93 & Yes \\ \hline
Privatix Presale & 179 & 78 & 11 & Yes \\ \hline
MXToken Crowdsale & 186 & 56 & 37 & Yes \\ \hline
EthereumFox & 77 & 493 & 491 & No \\ \hline
dgame & 42 & 302 & 108 & Yes \\ \hline
Easy Mine ICO & 351 & 1339 & 491 & Yes \\ \hline
Siring Clock Auction & 978 & 1641 & 2 & Yes \\ \hline
Government & 83 & 502 & 366 & No \\ \hline
\end{tabular}
}
\label{table: regressionTestCases}
\end{table}

\subsection{Factors Affecting our Repair Algorithm}

Before discussing the research questions that we developed
to evaluate the presented
genetic repair algorithm, we first summarize the key factors
that affect the correctness and efficiency of our genetic repair algorithm.

\begin{enumerate}
    
\item \textbf{Quality of test suite and vulnerabilities}.
The quality of provided test suite 
for a given vulnerable smart contract has a major impact
on the genetic repair algorithm. 
Recall that a mutant is considered as a repair
when all available test cases pass and the vulnerability detector does not report any vulnerability found. 
In our experiments, we constructed the test suite with a script to convert past block-chain transactions as positive test-cases as described in \autoref{sec:dataset}. 
The vulnerabilities detected by a smart contract checker like Oyente and Slither constitute the negative behavior that the generated patches should avoid.

\item \textbf{Timeout allocated to the algorithm}.
A feasible exploration of the search space (candidate patches) depends heavily on the amount of resources allocated to the genetic algorithm.  In general, the size of the generated search space of a given vulnerable contract depends on multiple factors including: (i) the size and complexity of the contract being repaired, (ii) the number of buggy statements in the contract, and  (iii) the mutation operators used by the algorithm.
However, the number of mutants that can be examined during the search is limited to the time budget allocated to the  algorithm. 
The bigger the time budget,
the higher the probability to produce a plausible patch.

\item \textbf{The consideration of gas consumption of patches}.
Considering the gas when searching for plausible patches
of a vulnerable smart contract can be of great benefit.
First, it can help to generate a low-cost repair for a given
vulnerable smart contract by comparing the gas consumption of generated
patches and selecting the one with low average cost.
Second, it can be used to optimize the efficiency of the genetic search algorithm in various ways. For example, it can be used to detect and discard  infeasible patches early. Note that a patch can be a plausible patch (passes the test-cases) but infeasible to be deployed on a real blockchain. This happens when the generated patch consumes a significantly large amount of gas and thus leads to expensive transactions. 
To  reduce the computational complexity
of the algorithm, one might need to maintain during the genetic search
the best known low-cost average gas usage (let us call it $g_{max}$) of a plausible patch. Then when a new  plausible patch is found that has lower average cost,
the bound $g_{max}$ will be updated accordingly. 
The bound $g_{max}$ can be updated on-the-fly during the search and used  to discard infeasible patches early without necessarily examining the entire test suite.

\item \textbf{The number of genetic mutation operators used by the algorithm}.
Note that the size of the search space that needs to be examined
when searching for a plausible patch for a vulnerable contract
can be extremely large.
Recall that the search space of a given vulnerable smart contract
is generated by mutating (buggy) statements in the contract.
Hence, the size of the generated search space grows
exponentially w.r.t. the number of considered lines
in the contract and the number of mutation operators.
The smaller the number of the mutation operators,
the smaller the size of the search space and the faster the algorithm.
However, reducing the number of the mutation operators may reduce significantly the capability of the algorithm to produce plausible patches.

\item \textbf{The state space search order}.
As the search space grows, the organization of mutants or candidate patches into sub-spaces becomes more critical to the efficiency of the  algorithm.
In general, there is no specific search
 strategy that one can follow
when examining the candidate patches
of a given vulnerable smart contract.
The search can be purely sequential and random
or it can be parallelized based on the semantics
of the mutation operators. 
However, as expected,
the search can be optimized by taking into consideration
some interesting factors including the semantics of the bug,
the semantics of the mutation operators,
and the gas consumption of generated patches.
\end{enumerate}

As one can see from the aforementioned factors,
 the correctness and efficiency of the genetic algorithm can be evaluated under many different settings.
For example, one might wonder how does the algorithm perform
when enabling/disabling the gas calculation
of generated patches or when increasing/decreasing
the size of test suite or the amount of time budget allocated to the genetic algorithm.
In this work, we choose to evaluate the correctness and efficiency of the genetic algorithm by considering five key {\em research questions}. The goal of the research questions is to evaluate the presented parallel genetic repair algorithm and to understand and draw conclusions about the factors affecting the correctness and quality of generated patches.

\section{Research Questions and Experimental Results}
\label{sec:RQ}

\paragraph*{Responsible Disclosure}
 We decide to publish the dataset for open science. The blockchain system is decentralized so even if we want to contact the owner we cannot find them. 
 
\paragraph*{Experimental Setup} 
 We run our tool on a single Amazon AWS EC2 instance c5.24xlarge which has 192GB of RAM and AWS-customized 2nd generation of Intel Xeon Scalable processor with 96 CPU execution threads allocated. Our repair algorithm and its implementation can be run on a compute cluster of multiple computing nodes. However, we run our experiments on a single node in this work for the simplicity and sake of financial budget.
Among the 20 vulnerable \gls{sc} subjects, our implementation prototype was able to handle 17 of them. The remaining 3  have syntax constructs that are currently unsupported or the version of Solidity used in the implementation of these contracts is too old to be supported. Therefore, we carried out our experiments on the 17 supported subjects. For increasing the variety of vulnerabilities being considered while avoiding the expensive cost of symbolic execution, we have employed Slither as the vulnerability detector for the first fourteen subjects and Oyente for the remaining subjects. We limit the scope of targeted vulnerabilities in our experiments to have more focused study to the following vulnerabilities: \gls{ed}, \gls{re}, \gls{io}, \gls{tod}. Our implemented gas analyzer is employed for determining the gas dominance relation between patches. In our experiments, the maximum gas usage bound $L$ is not specified since a reasonable value is subject to the concrete usage of the subject smart contracts from the viewpoint of the original developers.


\begin{table}
\caption{Results for   RQ\ref{RQ-main}, showing efficacy of patching. \\(Timeout is 1 hour, OOM denotes out of memory)}
\scalebox{0.7}{
\begin{tabular}{|l|c|c|c|c|}
\hline
\begin{tabular}{c}Name of Contract\end{tabular} & \begin{tabular}{c}
Vulnerabilities\\Discovered\end{tabular} & \begin{tabular}{c}SCRepair\\ Vulnerabilities Repaired\\ (Correct/Plausible)\end{tabular} &
\begin{tabular}{c}SCRepair-URS\\ Vulnerabilities Repaired\\ (Plausible)\end{tabular} & 
\begin{tabular}{c}Avg Run Time (mins)\\ SCRepair/\\SCRepair-URS\end{tabular} \\ \hline
Autonio ICO &  \gls{ed}(1) & \gls{ed}(0/1) &\gls{ed}(1) & 3/0.4 \\ \hline
Airdrop &  \gls{ed}(4) & \gls{ed}(3/4) & None & 8/OOM \\ \hline
Banana Coin &  \gls{ed}(1), \gls{re}(1) & \gls{ed}(1/1), \gls{re}(1/1) &\gls{ed}(1), \gls{re}(1)& 16/15.8 \\ \hline
XGold Coin &  \gls{ed}(2) & \gls{ed}(2/2) &None& 12/60 \\ \hline
Hodbo Crowdsale  &\gls{ed}(2) & \gls{ed}(2/2) & \gls{ed}(2)& 22/19.8 \\ \hline
Lescoin Presale  & \gls{ed}(2) & \gls{ed}(1/1) &\gls{ed}(2)& 2/14.8 \\ \hline
Classy Coin &   \gls{ed}(1), \gls{re}(1) & None & None & 29/60\\ \hline
Yobcoin Crowdsale &  \gls{ed}(2), \gls{re}(1) & \gls{ed}(1/1), \gls{re}(1/1) & None & 60/OOM\\ \hline
Classy Coin Airdrop &  \gls{ed}(2) & \gls{ed}(1/2) &\gls{ed}(2)& 1/3 \\ \hline
OKO Token ICO &  \gls{ed}(4), \gls{re}(2) & \gls{ed}(1/1), \gls{re}(1/2) &None& 60/OOM \\ \hline
ApplauseCash Crowdsale &  \gls{ed}(2), \gls{re}(1) & \gls{ed}(1/1) &None& 60/OOM \\ \hline
HDL Presale &  \gls{ed}(3) & \gls{ed}(3/3) &None& 55/51.8 \\ \hline
Privatix Presale &   \gls{ed}(1) & \gls{ed}(1/1) &\gls{ed}(1)& 2/2.8 \\ \hline
MXToken Crowdsale & \gls{ed}(1) & \gls{ed}(1/1) &\gls{ed}(1)& 30/5.2 \\ \hline
dgame &  \gls{io}(3), \gls{tod}(1) & None &None& 2/OOM \\ \hline
Easy Mine ICO &  \gls{io}(6), \gls{tod}(1) & None &None& 60/50.4\\ \hline
Siring Clock Auction & \gls{io}(3) & \gls{io}(0/1) & None & 4/60 \\ \hline
Total&\begin{tabular}{c}\gls{ed}(28),\gls{io}(12),\gls{re}(6)\\\gls{tod}(2), sum: 48\end{tabular}&\begin{tabular}{c}\gls{ed}(18/21),\gls{re}(3/4)\\\gls{io}(0/1), sum: 21/26\end{tabular}&\begin{tabular}{c}\gls{ed}(10),\gls{re}(1)\\sum: 11\end{tabular}& \\ \hline
\end{tabular} }
\label{table:RQ1-result}
\end{table}

\refstepcounter{RQ}\label{RQ-main}
\subsection*{RQ\theRQ: How effective is the repair algorithm at fixing detected bugs?}

\noindent \subsubsection*{\textbf{Setup}} To demonstrate the effectiveness of the presented genetic repair algorithm in fixing vulnerable \glspl{sc}, we run the genetic  algorithm on the selected set of \glspl{sc}. We evaluate the effectiveness of the algorithm by measuring the number of vulnerabilities that can be detected and repaired correctly  and the time it takes to generate correct patches of these vulnerable contracts.
Recall that a repair is generated by the algorithm 
when all test cases pass and no targeted vulnerability is found. We call such as a fix as a plausible fix.
Hence, the generated patch might still not be a correct patch.
We then check the correctness of the generated patches by inspecting the semantics of the patches manually. We assert a plausible fix for one vulnerability as correct if it  repaired the vulnerability being detected while the original business logic is not modified and the fix does not introduce new features or vulnerabilities to the code. We use the unguided random search implementation as the baseline to evaluate the effectiveness of the designated guidance in the search process. This is essential since the expensive complete patch quality assessment by the objective functions could result in lowering the efficiency and effectiveness\cite{QiRand2014,Arcuri2011}.

\noindent \subsubsection*{\textbf{Results}} For each of the considered vulnerable contracts, we have run our algorithm five times, each time with a timeout of one hour. We report the average value of the run time and the sum of plausibly successfully repaired vulnerabilities among five runs as the final results. 
Table \ref{table:RQ1-result} shows the summary of the results and the  average run time of the algorithm. 
 The algorithm was able to plausibly repair 26 occurrences of vulnerabilities among the 48 detected vulnerabilities. The average run time of the algorithm over the considered 17 subjects was 25 minutes. We noticed that the main bottleneck of the implementation is due to the test case execution time which often consumes the most computational resources and blocks the synchronization barrier of each iteration of the main loop of the algorithm.
When inspecting the generated patches, we found that our algorithm was able to fix correctly 21 vulnerabilities out of the detected 48 vulnerabilities. With the same timeout, our genetic algorithm was able to plausibly fix 15 more vulnerability than the unguided random search version yielding a 136\% improvement. This clearly shows the guidance in the search process from the genetic algorithm has increased the repair efficiency significantly. 
Moreover, a careful inspection of the results reported in Table \ref{table:RQ1-result}  leads to the following interesting observations.

\begin{RQObservation}{RQ1.1}
    As shown in \autoref{table:RQ1-result} there are four different classes of vulnerabilities that have been considered when evaluating the algorithm, namely, \gls{ed}, \gls{re}, \gls{io}, and \gls{tod}.
    We observed that most of the vulnerabilities of the classes \gls{ed} and \gls{re} have been fixed correctly by the algorithm, where 21 out of the 28 detected \glspl{ed} have been plausibly repaired and 4 out of the 6 detected \glspl{re} have been plausibly repaired.
    On the other hand, the algorithm
    was unable to generate correct patches for any of
    the vulnerabilities of the classes IO and TOD; one plausible patch for IO was generated.
\end{RQObservation}


\begin{RQObservation}{RQ1.2}
The occurrence rates of the vulnerabilities ED, RE, IO, and TOD in the considered vulnerable contracts are as follows:
 ED occurs $58\%$, RE occurs  $13\%$, TOD occurs  $4\%$, and IO occurs  $25\%$.
 We observed that the ED vulnerability is the most frequently occurring class of bugs in the selected vulnerable contracts, where 28 out of the 48 detected bugs are ED bugs.

\end{RQObservation}

\begin{RQObservation}{RQ1.3}
  We observed that 7 out of the considered 17 
  vulnerable contracts have been repaired in less than 
  10 minutes, where most of these contracts contain multiple bugs. This demonstrates clearly the efficiency of the presented parallel genetic repair algorithm in fixing vulnerabilities in a considerably short amount of time.
\end{RQObservation}

\begin{tcolorbox}


Answer to \textit{RQ\theRQ}:
Among the 48 detected vulnerabilities in the 20 vulnerable smart contracts,
the algorithm was able to fix plausibly 26 vulnerabilities,
where 21 of these plausible fixes have been verified to be correctly fixing the vulnerabilities. Notably, our implementation fully repaired 10 of the 20 contracts. 
\end{tcolorbox}

\refstepcounter{RQ}\label{RQ:affect-gas}
\subsection*{RQ\theRQ: Does fixing the vulnerability affect the gas usage?}

\noindent \subsubsection*{\textbf{Setup}} When fixing the detected vulnerabilities, expressions in the vulnerable smart contracts will be modified. However, it is unclear whether plausibly fixing the vulnerabilities would change the average gas consumption of the smart contract.  We therefore perform a comparison on the average gas  consumption between the original vulnerable smart contract and the plausibly patched versions generated from five repeated runs conducted in RQ\ref{RQ-main}. Gas dominance levels between the original contract and the patched versions are computed. We assert the patched version has different average gas  consumption from the original version when they are of different gas dominance levels. 
To calculate the gas dominance level, the gas formula of the original version and the patched versions will be generated, as described in earlier sections.

\noindent \subsubsection*{\textbf{Results}} \autoref{table:RQ2-result} shows the difference in average gas  consumption between the plausible patches and the original version. Subjects for which plausible patches could not be generated within time limit (1 hour) are omitted for consideration of this RQ. To sum up, 6 out of 8 (75\%) of our set of selected subjects have plausible patches with gas formula that are different from the original vulnerable version while half (50\%) of our set of selected subjects have plausible patches having gas dominance levels different from that of the original vulnerable version. This suggests the possibility that fixing vulnerabilities in smart contracts can change the average gas consumption of the original contract. For the subjects with plausible patches amending the average gas consumption, each independent patch generation process has high probability (93.65\% in our experiments) of generating plausible patches of gas dominance levels different from the original version.


\begin{tcolorbox}

Answer to \textit{RQ\theRQ}: 
 In general, when fixing vulnerabilities in a vulnerable smart contract, the gas  should be one of the factors considered in the repair process.

\end{tcolorbox}

\begin{table}
\caption{Results for   RQ\ref{RQ:affect-gas}, showing gas variation between buggy contract and patched versions.}
\scalebox{0.7}{

\begin{tabular}{|l|c|c|c|c|c|}
\hline
\textbf{Name of Contract} & \textbf{\# plausible patches} & \begin{tabular}{c} \textbf{\# patches with diff.} \\ \textbf{gas formula} \\ \textbf{from original} \end{tabular} & \begin{tabular}{c} \textbf{\# patches with diff.} \\ \textbf{gas dominance level} \\ \textbf{from original} \end{tabular} & \begin{tabular}{c} \textbf{Ratio of} \\ \textbf{plausible patches yielding} \\ \textbf{diff. average gas} \end{tabular}\\ \hline
Autonio ICO & 7 & 7&6&85.7\%\\ \hline
Airdrop & 5 & 0&0 & 0\% \\ \hline
Banana Coin & 4 &4& 0& 0\%\\ \hline
XGold Coin & 7&7&0& 0\%\\ \hline
Hodbo Crowdsale & 3&3&3& 100\%\\ \hline
Classy Coin Airdrop & 5&5&5& 100\%\\ \hline
HDL Presale &1&0&0& 0\%\\ \hline
Privatix Presale & 9&8&8& 88.89\%\\ \hline
\end{tabular} 
}
\label{table:RQ2-result}
\end{table}

\refstepcounter{RQ}\label{RQ:plausible-diff-gas}
\subsection*{RQ\theRQ: Can plausible patches vary significantly in average gas consumption?}

\noindent \subsubsection*{\textbf{Setup}} Further, we would like to investigate whether there is a possibility to plausibly fix the vulnerabilities with more than one patch yielding to different average gas consumption across patches. In other words, we intend to understand whether the same bugs can be fixed with patches of different average gas consumption. If the answer is positive, we then justify the need to attempt pursuing a more gas-efficient plausible patch during the search process. We conduct our analysis on the patches generated in RQ\ref{RQ-main} across five repeated runs. We leverage gas dominance levels of patches as a proxy to compare the difference in average gas consumption between patches. We assert a patched version has different average gas consumption from the other when they are of different gas dominance levels. Note that two patched versions have different gas dominance levels when the gas formula of one of the two versions dominates the other. However, to calculate the gas dominance level of generated patches, the gas formulas of the original version and the patched versions need to  be generated first.

\noindent \subsubsection*{\textbf{Results}} \autoref{table:RQ:plausible-diff-gas-result} shows the difference in average gas consumption between the generated plausible patches of selected vulnerable contracts. Subjects for which plausible patches could not be generated within time limit (1 hour) are omitted for consideration of this RQ. For 5 out of 8 subjects (62.5\%), we were able to get a set of plausible patches with more than one corresponding unique gas formulas, indicating the diversity of gas consumption between plausible patches addressing the same set of vulnerabilities.
We noticed that plausible patches have around two gas dominance levels among them, on average, for a given contract.

\begin{RQObservation}{RQ3.1}
For 62.5\% of the considered subjects, there exist plausible patches having different average gas consumption.
\end{RQObservation}


\begin{table}
\caption{Results for   RQ\ref{RQ:plausible-diff-gas}, gas variation among patch candidates is shown.}
\scalebox{0.75}{
\begin{tabular}{|l|c|c|c|c|c|}
\hline
\textbf{Name of Contract} & \textbf{\# plausible patches} & \begin{tabular}{c} \textbf{\# unique gas formula} \\ \textbf{among patches} \end{tabular} & \begin{tabular}{c} \textbf{\# gas dominance levels} \\ \textbf{among patches} \end{tabular}\\ \hline
Autonio ICO & 7  & 7 & 7\\ \hline
Airdrop & 5  & 1 &1\\ \hline
Banana Coin & 4 &2&1 \\ \hline
XGold Coin & 7&5&1 \\ \hline
Hodbo Crowdsale & 3&2&2\\ \hline
Classy Coin Airdrop & 5&1&1\\ \hline
HDL Presale &1&1&1\\ \hline
Privatix Presale & 9&3&3\\ \hline
\end{tabular} 
}
\label{table:RQ:plausible-diff-gas-result}
\end{table}

\begin{tcolorbox}

Answer to \textit{RQ\theRQ}: Different plausible patches can yield various average gas consumption for fixing the same vulnerabilities. We should therefore attempt to guide the search towards more gas-efficient plausible patches besides considering their correctness. 
\end{tcolorbox}

\refstepcounter{RQ}\label{RQ:gas-objective}
\subsection*{RQ\theRQ: How effective is the gas ranking approach at producing low-cost patches?}

\noindent \subsubsection*{\textbf{Setup}} During the patch generation process, we have integrated our proposed gas comparison approach to compare the relative gas usage of generated patches. The relative gas dominance relationship is then used in the genetic patch generation process as a guidance to generate a potentially gas optimized patch. To evaluate systematically the effectiveness of the gas usage objective in producing low-cost patches, we run our repair algorithm on the selected vulnerable smart contracts under two different settings: 
the first setting is when the the gas ranking objective is active (done in RQ\ref{RQ-main})
and the second setting is when the gas ranking objective is deactivated. The first setting is a reuse of patches generated in RQ\ref{RQ-main} while the second setting is additional runs with repeating factor of five and timeout of one hour. 
Later, we run all patches generated in both settings on our generated test cases and collect the average runtime gas usage of each setting. For consistent and fair comparison, we only consider patches fixing all vulnerabilities. Different from RQ\ref{RQ:affect-gas} and RQ\ref{RQ:plausible-diff-gas}, this RQ attempts to expose the change in average gas consumption for the previous usages of the contracts to infer practical gas cost changes.
 
\noindent \subsubsection*{\textbf{Results}} \autoref{table:RQ3-result} shows the summary of average gas usage of patches generated with and without the gas objective being activated. Subjects that plausible patches could not be generated within time limit (1 hour) are omitted for consideration of this RQ. Overall, 6 out of 8 subjects among subjects for which both settings can generate plausible patches (75\%), the gas objective is effective to reduce the average cost of the patches by up to 9.31\% for our subjects. Two subjects (Autonio ICO and Classy Coin Airdrop) do not have varied average gas usage between patches generated in two settings. 
One subject (\emph{MXToken Crowdsale}) does not have plausible patch generated where the gas objective is deactivated in the five repeated runs. In addition, we have also done careful profiling of the algorithm exposing the fact that gas ranking has frequently been the determining factor of patch rankings during the repair process of the selected subjects even though the gas objective is employed as a secondary objective.

\begin{table}
\caption{Results for   RQ\ref{RQ:gas-objective}, showing the average gas usage of the patched versions (on the given tests) in two settings. The percentage shows the improvement on average gas usage when the gas objective is enabled.}
\begin{tabular}{|l|c|c|}
\hline
\textbf{Name of Contract} & \begin{tabular}{c} \textbf{Average gas usage} \\ \textbf{(Gas objective is enabled)} \end{tabular} & \begin{tabular}{c} \textbf{Average gas usage} \\ \textbf{(Gas objective is disabled)}
\end{tabular}
\\ \hline
Autonio ICO & 87092.2 (0\%) & 87092.2  \\ \hline
Airdrop & 73633.4 (0.92\%) & 74316.1 \\ \hline
Banana Coin & 72535.1 (0.01\%) & 72542.3 \\ \hline
XGold Coin & \textbf{46154.6 (6.37\%)} & \textbf{49296.3 } \\ \hline
Hodbo Crowdsale & 38848.3 (0\%) &38848.6 \\ \hline
Classy Coin Airdrop & 72810.5 (0\%) &72810.5 \\ \hline
HDL Presale & 48536.6 (0\%)   &48536.525 \\ \hline
Privatix Presale & \textbf{40323.7 (9.31\%)} & \textbf{44464.46 }\\ \hline
MXToken Crowdsale & 43247.4& No patch generated \\ \hline
\end{tabular}
\label{table:RQ3-result}
\end{table}

\begin{tcolorbox}
Answer to \textit{RQ\theRQ}: 
When enabling the gas objective during repair, we observed that the average
gas consumption of generated patches of four vulnerable contracts
has been reduced comparing to the setting in which the gas objective was disabled.
We observed also that  the average gas of two subjects has been considerably reduced when enabling the gas objective,
where the average gas of the patched version of XGold Coin contract is reduced by $6.37\%$ and the average gas of the patched version of Privatix Presale contract has been reduced by $9.31\%$. This is a considerable reduction as gas costs real money. 

\end{tcolorbox}

\begin{table}[h]
\caption{Results for RQ5, obtained by varying the timeout from 30  minutes to 1 hour}
\scalebox{0.85}{
\begin{tabular}{|l|c|c|}
\hline
\textbf{Name of Contract} & \begin{tabular}{c}\textbf{Vulnerabilities }\\ \textbf{Discovered}\end{tabular} & \begin{tabular}{c}\textbf{Vulnerabilities Plausibly Fixed}\\\textbf{(30mins/1hr timeout)}\end{tabular} \\ \hline
Autonio ICO & \gls{ed}(1) & Same \\ \hline
Airdrop &  \gls{ed}(4) & Same \\ \hline
Banana Coin  & \gls{ed}(1), \gls{re}(1) &Same\\ \hline
XGold Coin & \gls{ed}(2) & Same \\ \hline
Hodbo Crowdsale & \gls{ed}(2) & Same \\ \hline
Lescoin Presale & \gls{ed}(2) & \gls{ed}(0/1) \\ \hline
Classy Coin &  \gls{ed}(1), \gls{re}(1)& Same \\ \hline
Yobcoin Crowdsale & \gls{ed}(2), \gls{re}(1) & \gls{ed}(0/1), \gls{re}(0/1) \\ \hline
Classy Coin Airdrop & \gls{ed}(2) & Same \\ \hline
OKO Token ICO & \gls{ed}(4), \gls{re}(2) & \gls{ed}(0/1), \gls{re}(0/1) \\ \hline
ApplauseCash Crowdsale & \gls{ed}(2), \gls{re}(1) & \gls{ed}(0/1), \gls{re}(0/0) \\ \hline
HDL Presale & \gls{ed}(3) & \gls{ed}(1/3) \\ \hline
Privatix Presale & \gls{ed}(1) & Same \\ \hline
MXToken Crowdsale & \gls{ed}(1) & \gls{ed}(0/1) \\ \hline
dgame & \gls{io}(3), \gls{tod}(1) & Same \\ \hline
Easy Mine ICO & \gls{io}(6), \gls{tod}(1) & Same \\ \hline
Siring Clock Auction  & \gls{io}(3) & Same  \\ \hline
\end{tabular} }
\label{table:RQ4-result}
\end{table}

\refstepcounter{RQ}
\subsection*{RQ\theRQ: How does the time budget impact our effectiveness at fixing bugs?}

\noindent \subsubsection*{\textbf{Setup}} Allocating or estimating a feasible time budget 
to a genetic repair algorithm is an interesting open  problem.
It is crucial as it affects the capability  of the algorithm
in generating plausible patches for a given vulnerable contract.
There are some key factors that should be taken into consideration in order to allocate a feasible time budget to our repair algorithm including: 
(i) the size of the test suite,
(ii) the complexity of the contact (i.e., larger contracts may take longer time to be analyzed than smaller contracts),
and (iii)  the estimated size of the search space
which in turn depends on the number of the mutation operators used by the algorithm and size of the original vulnerable contract.
To address this research question, we choose
to evaluate the algorithm under two different time budgets:
the first is when we set the timeout to 30 minutes
and the second is when we set the timeout to one hour.
The goal is then to measure the number of vulnerable contracts
that have been repaired under the two settings.

\noindent \subsubsection*{\textbf{Results}} Table \ref{table:RQ4-result} shows the results 
of running the algorithm over the selected vulnerable smart contracts 
using two different values 
of the timeout parameter (30 minutes and 1 hour).
As shown in the table, when setting the timeout parameter to 30 minutes the algorithm was able to generate plausible patches for 17 vulnerabilities out of the 48 detected ones,
achieving a success rate of $35.4\%$.
On the other hand, when setting the timeout parameter
to 1 hour the algorithm was able to 
generate plausible patches for 26 vulnerabilities, 
achieving a success rate of $54.2\%$.
While the amount of improvement on the repair
rate looks somewhat small, it is very crucial
as it shows that some vulnerabilities
can be only  repaired when increasing
the timeout to 1 hour.
This clearly demonstrates
the impact of the timeout parameter
on the effectiveness of the algorithm. 
However, since every detected vulnerability in a given vulnerable smart contract needs to be repaired and the fact that
the size of the search space can be extremely large, the time budget allocated to the algorithm can play a key role in the successful termination of the algorithm. When we increase the time budget of
 the algorithm, we increase the size of the explored search space which in turn increases the probability of generating plausible patches.


\begin{tcolorbox}

Answer to \textit{RQ\theRQ}:
When we increase the timeout parameter of the algorithm
from 30 minutes to 1 hour we observe that the vulnerability repair rate of the algorithm has been increased from $35.4\%$ to $54.2\%$,
where the genetic algorithm was able to repair 9 extra vulnerabilities. 
This demonstrates clearly the importance
of allocating a substantial time budget (at least one hour) to the algorithm
when repairing vulnerable smart contracts.

\end{tcolorbox}







\subsection{Threats to Validity}

Finally, we discuss the threats to validity of our experimental results.

\paragraph*{Internal validity}
Threats to internal validity are related to the representative nature of our conclusions and summaries made based on our experiment results. In our experimental study, we have conducted our experiment on a sample dataset to  evaluate our approach. The size of the dataset is however limited since this is the first automated smart contract repair work, and therefore, there is no consolidated dataset for use like Detect4J\cite{defects4j} for Java. We are aware that our approach employs biased random search techniques, and for this reason each experiment was repeated five times.
We admit that the presented results are potentially skewed even though we have conducted our experiments with a replication factor of five times for each setup.

\paragraph*{External validity}
External validity treats are related to the ability to generalize our findings. We have only evaluated our work on four known vulnerability types. While our approach is vulnerability-agnostic, the efficacy in terms of fixing other vulnerabilities remains unknown. On the other hand, we have conducted our experiments on real-world subjects as an attempt to investigate the performance of approach. This does not guarantee that similar efficacy will be exhibited for arbitrary vulnerable smart contracts. 

\section{Related Work}

We discuss the related literature on automated program repair, smart contract analysis, and gas usage calculation of smart contracts. 

\subsection{Automated Program Repair}

Automated program repair \cite{GOUES19} has been the subject of considerable recent attention in the software engineering research community. Commonly, they attempt to automate the process of fixing the bugs exposed by failing test cases, and these techniques are collectively called \emph{test-based repair techniques}. The patch that can fix all the given tests is called a \emph{plausible patch}.
Several test-based program repair approaches have been developed. These approaches can mainly be classified into search-based and semantics-based approaches.

Search-based approaches developed by \cite{GouesNFW12}, and \cite{Martinez2015} show promising results towards the automation of bug fixing. The key idea of their approaches is to use failing test cases to identify bugs and then apply mutations to the source code until the program passes all failing test cases, while continuing to pass previously passing tests. Genetic programming \cite{GouesNFW12} as well as random search \cite{QiRand2014} have been used as search techniques for finding a plausible patch, a patch passing given test-cases. GenProg \cite{GouesNFW12} is one of the early works among search-based repair techniques.

Semantic analysis techniques like SemFix \cite{Nguyen2013}, Nopol \cite{Xuan2017}, DirectFix \cite{Mechtaev2015}, SPR \cite{Long2015}, Angelix \cite{Mechtaev2016} and JFIX \cite{Le2017} split patch generation into two steps. First, they infer a desired specification (or a repair constraint) for the buggy program statements, which is often accomplished via symbolic execution of the given tests. Second, they synthesize a patch for these statements based on the inferred specification, using program synthesis techniques. These works view program repair as a specification inference problem, as opposed to searching among candidate patches. These approaches can be combined with search: we explore patches by considering insert/delete/replace of statements, while the semantic analysis can help synthesize expressions to be inserted in the statement replacements \cite{Yi2017}.

Apart from automated program repair approaches driven by functionality (often exposed by a test-suite), some other studies e.g. Caramel \cite{caramel} attempts to automatically fix non-functionality bugs (such as performance bugs). Such bugs can be fixed by inserting an early termination statement inside loops. The generated patch can potentially reduce the run time of the program.

Our smart contract repair problem (defined in Problem \ref{problem:ascr}) is similar to the test-based program repair problem. We also leverage the test cases to examine functional correctness of patches. However, since vulnerabilities in smart contracts have been raising serious financial losses, our  patches need to not only be test-adequate but also secure and gas-aware.

\subsection{ Testing and Analysis of Smart Contracts}

Analysis  of   smart  contracts  is  a  popular  topic  that  has  received  a  lot  of  attention recently, with numerous tools being developed based on fuzz testing, symbolic execution and constraint solving. \cite{Luu2016,Grossman2017,KalraGDS18,Jiang2018,Amani2018,Meyden2019}. Oyente \cite{Luu2016} is one of the earlier works on symbolic analysis of smart contracts. The work in \cite{Amani2018}  translates smart contract source code to
Isabelle/HOL in order to validate smart contracts.
The authors use the symbolic security analyzer Oyente \cite{Luu2016}  to detect vulnerabilities in smart contracts. 
The tool ContractFuzzer \cite{Jiang2018} 
 uses fuzz testing to detect security vulnerabilities in smart contracts.  Recently, van der Meyden \cite{Meyden2019}  conducted
a formal analysis of an abstract  model  of  smart contract code (atomic swap smart contracts)  using the epistemic MCK model checking tool \cite{Meyden2004}.
He showed how to  automatically  verify  that  a  concrete  implementation  of  atomic  swap  satisfies  its specification using epistemic-temporal logic model checking.

There is also a considerable amount of work on the mutation testing of smart contracts  \cite{HaoranWu2019,HonigEH19,Fu2019}. Mutation testing \cite{PapadakisK00TH19,Offutt2001}  is a technique for evaluating the quality of a set of test cases (i.e., a test suite). It works by introducing faults into a system via source code mutation and then analyzing the ability of some developed test suite to detect these faults. The work in \cite{HaoranWu2019} has implemented certain mutation operators and tested them on four DAPPS (decentralized applications on blockchain). However, their  approach does not take into consideration the access control faults and the gas usage of the mutated contracts. 
The work in \cite{HonigEH19} developed a mutation testing framework
for smart contracts that considers the access
control faults, but it does not consider gas.
The work in \cite{Fu2019} introduced  a  smart contract mutation approach, but for testing implementations of the Ethereum Virtual Machine (EVM) implementations and not smart contracts. There are
two available GitHub repositories with related tools on mutation testing of smart contracts:  (1) Eth-mutants\footnote{\url{https://github.com/federicobond/eth-mutants}} which implements
just one mutation operator and (2) UniversalMutator
which describes a generic mutation tool \cite{GroceHMSZ18} with set of operators for
Solidity.  

None of the past works on testing and analysis, focus on automated repair of smart contracts. These works are focused on finding bugs in smart contracts, and not on fixing bugs. Ours is the first proposed approach and tool for smart contract repair.

\subsection{Gas usage Calculation of Smart Contracts}

The work in \cite{MarescottiBHAS18}  presented techniques for calculating the worst case gas usage of smart contracts. 
 Their approach is based on symbolically enumerating all execution paths and unwinding loops up to a certain limit. The authors infer the maximal number of iterations 
 for loops and generates accurate gas bounds.
 Knowing the worst case gas usage bound for smart contracts 
 can be extremely useful   as it provides the smart contract users important information about the maximum amount of gas they need to pay before sending out their transactions to the blockchain networks.  The work in \cite{Signer2018} provides a
graphical user interface that depicts gas usage information 
(e.g. best and worst case gas usage, and the gas usage of different parts of the code) which helps the developers to optimize the gas usage of their smart contracts.

The past works on gas usage calculation, while relevant to our works, are not directly usable in our repair method. For fast gas usage comparison among patch candidates, we have thus defined and used the notion of gas dominance.

\section{Discussion}

In this paper, we have presented the first work on automatically repairing smart contracts. Our 
repair method is gas-aware.
The repair algorithm is search-based, and it breaks up the huge search space of candidate patches down
into smaller mutually-exclusive  spaces that can be processed independently.
The repair technique considers gas usage of vulnerable contracts
when generating patches for detected vulnerabilities. 
Our experiments demonstrated that our method can  handle  real-world contracts
and generate repairs in a short time (less than 1 hour) while taking into consideration the gas consumption of the generated repairs.  

Since the owners of smart contracts are unknown, we could not reach out to them in advance, prior to publication. Nevertheless, we hope that our work will spur greater interest in automatically fixing smart contracts via a variety testing, analysis, validation and synthesis methods. We have made our smart contract repair tool and dataset available in GitHub from the following site.
\begin{center}
  {\tt \url{https://SCRepair-APR.github.io}}  
\end{center}

\section*{Acknowledgments}
This work was partially
supported by the National Satellite of Excellence in Trustworthy
Software Systems, funded by National Research Foundation (NRF) Singapore under National Cybersecurity R\&D (NCR) programme, and by a Singapore Ministry of Education (MOE) Academic Research Fund (AcRF) Tier 1 grant (17-C220-SMU-008).

\bibliographystyle{ACM-Reference-Format}
\bibliography{SCRepair}

\end{document}